
\documentclass[11pt]{article}

\usepackage{enumitem}
\usepackage{booktabs}
\usepackage{amsthm}
\usepackage{amssymb}

\newcommand{\head}[1]{\textnormal{\textbf{#1}}}

\newtheorem{theorem}{Theorem}[section]
\newtheorem{lemma}[theorem]{Lemma}
\newtheorem{corollary}[theorem]{Corollary}

\theoremstyle{definition}
\newtheorem{defn}{Definition}[section]

\theoremstyle{remark}

\newtheorem*{note}{Note}

\newcommand{\eat}[1]{}

\begin{document}

\title{Network Flows Under Thermal Restrictions}
\author{Samiksha Sarwari\footnote{Department of Mathematics, IIT Roorkee, Roorkee 247 667, India} \hspace{0.8in} Shrisha Rao\footnote{IIIT Bangalore, Bangalore 560 100, India} \\ {\tt 22paiuma@iitr.ac.in} \hspace{0.5in} {\tt shrao@ieee.org}}
\maketitle

\begin{abstract}

We define a \emph{thermal network}, which is a network where the flow
functionality of a node depends upon its temperature.  This model is
inspired by several types of real-life networks, and generalizes some
conventional network models wherein nodes have fixed capacities and
the problem is to maximize the flow through the network.  In a thermal
network, the temperature of a node increases as traffic moves
through it, and nodes may also cool spontaneously over time, or by
employing cooling packets.  We analyze the problems of
maximizing the flow from a source to a sink for both these cases, for
a holistic view with respect to the single-source-single-sink dynamic
flow problem in a thermal network.  We have studied certain properties 
such a thermal network exhibits, and give closed-form solutions for the 
maximum flow that can be achieved through such a network. \\

\noindent{\bf Keywords}: max flow, network problems, graph walks,
thermal networks

\end{abstract}

\section{Introduction} \label{intro}

Many systems have components that are subject to thermal degradation,
and which therefore must be managed carefully to obey temperature
constraints.  This is particularly true of
electronics~\cite{hosseini2010,zhan2008}, but large systems such as
data centers~\cite{capozzoli2014,fulpagare2015} require extensive
thermal management as well.  It is therefore essential to monitor and
control the flow of work through the nodes of such a system, in
addition to the use of special equipment and measures for cooling.

The physics of thermal management can be quite
complex~\cite{fraundorf2003,kittel2000}, but in practical systems,
heuristics are generally used.  This is true of computer systems and
networks~\cite{kumar2016,joshi2012} as well as industrial process
systems~\cite{henze}.

Modeling of congestion in networks is also a well-known problem;
besides computer networks, it is also studied in the context of
vehicular traffic~\cite{kerner2009} and air
traffic~\cite{glockner1996}.  In traffic network models, congestion
control is attempted using variable pricing and other changes to node
characteristics~\cite{kockelman2005,black2010}.  

Besides congestion caused by a surfeit of packets or other arrivals at
a network node, there can also be constraints due to a node's
time-varying capacity.  This is most preeminently seen in wireless
sensor networks where nodes are subject to varying power
levels~\cite{pantazis2007}.

Existing works on capacitated networks and flow
routing~\cite{similar-to-mine,similar-to-coolant-problem,similar-to-coolant-problem-2}
do not address these issues; there does not seem to be any
sufficiently general way to consider thermal constraints, time-varying
network characteristics such as power levels, or the like.  In this
paper, we give models and results to address this.

In this paper, we give optimal solutions to the problem of maximum
flow through the following two models of a thermal network with
capacity constraints on nodes.  (Though we speak of temperature,
the concept of a thermal network and the respective parameter of
temperature can be suitably modified to model any network of nodes
that exhibit the same characteristics; the thermodynamics of
temperature or heat are not essential to our analyses.)

In the first model, a node that reaches a critical temperature stops
functioning, and can no longer be used to transmit packets; however,
it can cool with time.  The network in this \emph{dissipating model}
thus has the property of reviving itself over time, i.e., once the
network is exhausted (because the nodes are too hot), it is possible
to give it some rest (and let the nodes cool) so that we can again
send more packets through it.  Keeping in mind our aim to send as many
packets through the network as possible, we realize that the problem
now changes to sending the maximum possible packets while minimizing
the time during which the network becomes dysfunctional, so that the
number of packets in a given duration is maximized, which is
equivalent to saying that we maximize the rate of flow through the
network.  This is a dynamic problem, as the nodes repair themselves
with time, thereby making the state of the network depend upon one more
factor, i.e., time.  We study the transient state of the network, in
which the min node-cut-sets vary with time, and move on to analyze the
network to figure out if there exists a steady state.  This means that
we try to find out whether we need the state of the network at all time
instants to obtain a maximum flow using the Ford-Fulkerson
algorithm~\cite{ford-fulkerson}, or if there exists a closed-form
solution which depends only upon the initial state of the network and
  the information about the nodes' heat dissipation.  We therefore set
out to prove that there indeed exists a steady state of the network for
which we prove that there exists a node-cut-set which is the min node-cut-set
throughout after a certain amount of time.  Using the results, we are
able to find the value of the maximum rate of flow achievable in this
network.

In the second model, the network does not have the same self-healing
properties, but we have some kind of special packets called
\emph{cooling packets} at our disposal, which can decrease the
temperature of nodes.  So, given a dysfunctional network, we can send
these cooling packets to specific critical nodes, so as to make the
network functional again.  One advantage of this model is that we have
the liberty to send these special packets only to the nodes that
need thermal repair (i.e., cooling).  But this is also what makes this
problem more challenging than the previous one, as now we need to
figure out the optimal strategy for sending these cooling packets so
as to minimize the requirement of these packets while maximizing the
flow through the network, i.e., we need to minimize the repair cost,
while maximizing the efficiency of the system.  For this, we first
find out the value of max flow of packets using as many cooling
packets as we may require (i.e., find the max flow if we thermally
repair all the nodes of the network completely).  The question them is
if we can obtain the same amount of flow, but with a smaller number of
cooling packets used, and further, what is the least number of cooling
packets needed to ensure maximum flow.  We analyze this scenario by
finding the exact nodes that need repair, and the exact minimum
possible amount of repair that will make the network work at its best.
The trick used to solve this problem is based on the fact that the min
node-cut-set determines the max flow.  So, we do not really require any
other node-cut-set to work at capacity more than the maximum possible
capacity of the minimum node-cut set.  This means that we do not need to
repair all nodes to their best capacities, nor do we need to repair
all the nodes at all.  The next step is to identify the nodes and the
minimum capacities at which they should function, and find out the
optimal routing pattern for the same.  This is done by creating a set
of walks such that if we send cooling packets via these walks, not
only is the minimum node-cut-set is revived to its maximum capacity, but
the nodes of other node-cut-sets are also revived to the extent that none
of them becomes the limiting node-cut-set.  We prove that there exists such
a set of walks, and calculate the number of cooling packets to be sent
via these walks, and the corresponding maximum flow achieved.

Table~\ref{table1} provides an insight into the paper in brief.

\begin{table}[htbp] \label{table1}
\centering
\begin{tabular}{lp{7cm}}
\toprule
\head{Network Type} & \head{Results} \\
\midrule
static network & Theorem~\ref{max-flow-min-cut} (max-flow min-cut) \\
\midrule
uniform with dissipation & 
Corollary~\ref{max flow same} (consequence of Theorem~\ref{min cut set same} 
and Theorem \ref{max-flow-min-cut}); Theorem~\ref{value of tau};
Theorem~\ref{value of flow uniform} \\
\midrule
non-uniform with dissipation & 
Theorem~\ref{after n stages} (generalization of Theorem~\ref{min cut set same});
Theorem~\ref{value of flow non uniform}; (generalization of Theorem~\ref{value of flow uniform})
\\
\midrule
non-uniform with cooling & 
Theorem~\ref{4.4}; Theorem~\ref{4.5}; Theorem~\ref{number of cooling packets};
Theorem~\ref{4.8}; Theorem~\ref{4.9}\\
\bottomrule
\end{tabular} \caption{A Summary of the Results}
\end{table}

Overall, this paper is organized as follows.  In Section ~\ref{thermal-network}, first (in Subsection \ref{terminologies}) we introduce some of the preliminaries and provide a background on which the subsequent sections are based. The system model (Subsection ~\ref{system-model}), involves a detailed description of the constraints on the Thermal Network and the results and techniques to maximize the flow of packets subject to these constraints.
In Section ~\ref{dissipating-model}, this system model with an additional characteristic that the nodes can cool themselves down with time is considered. This is a dynamic system, the analysis of which requires an in-depth analysis of the transient and steady states of the system. We discover some properties of the system which are used to determine the rate of maximum flow that can be achieved.
In Section ~\ref{cooling-system}, the system model with another special characteristic is discussed. In this case, the nodes are not attributed with the self-cooling properties, but we have dedicated cooling packets for the purpose of repair of the network. The problem is to optimize the flow of the heating packets along with optimization of the number of cooling packets used so as to be able to reduce the maintenance cost of the network.
Such networks have some special properties with respect to their minimum node-cut-sets, which determine the maximum flow due to the max-flow min-cut theorem.

\section{Thermal Network} \label{thermal-network}
\subsection{Terminology}
\label{terminologies}
We have a network $G$ (also referred to as network) with nodes $v_{i}$ and edges $v_{i}v_{j}$ (directed from $v_{i}$ to $v_{j}$). The temperature of a node, say $v_{i}$, cannot rise above a temperature (called critical temperature, $\theta _{c_{i}}$), and cannot fall below the specified base temperature ($\theta_{0_{i}}$). A node at its critical temperature cannot be traversed any more, and is called a dysfunctional node. 
The packets (heating packets, which shall be referred to as simply packets throughout the text) have the property that they heat up the nodes they traverse by a certain amount $\triangle T_{u}$. (Hence, the temperature restriction on the nodes limits the number of packets that can traverse any node. Thus, we define the capacity $c_{i}$ of node $v_{i}$ to be the maximum number of packets that can traverse $v_{i}$ before it becomes dysfunctional.)\\
If some nodes of network become dysfunctional such that there exists no path fro the packets to travel from $s$ to $t$ , the network is said to be disconnected (or the network is said to have gone dysfunctional). Technically, this means that all nodes of some or the other node-cut-set have gone dysfunctional.

\begin{defn}\label{node-cut-set}

A \emph{node-cut-set} is a set of nodes, the removal of which, disconnects the network such that $s$ and $t$ lie in two separate blocks of the disconnected network (or equivalently, separate $s$ from $t$)

\end{defn}

The problem is essentially to maximize the flow, i.e. to obtain the max flow which is the maximum possible amount of flow from $s$ to $t$ that can be achieved through the network before the network becomes dysfunctional.

Section 2 is a special case of this basic model wherein the nodes have the capacity of cooling themselves down. We shall denote this rate by $\omega$. This phenomenon will be referred to as \emph{dissipation} drawing analogy from the natural dissipation phenomenon. However, because of the base temperature constraints, a node cannot be cooled down below $\theta _{0_{i}}$.\\
Since this network is time dependent, we are interested in maximizing the rate of flow of packets (the number of packets traveling from $s$ to $t$ per unit time), which shall be denoted by $\bar{f}$. This analysis will be conducted separately on a uniform and a non-uniform network.

\begin{defn}
\label{uniform}
A \emph{uniform network} is a network in which all the nodes have identical capacities. A network which is not uniform is called a \emph{non-uniform network}.

\end{defn}

Section-3 is another special variant of the basic model wherein we have cooling packets (entities which decrease the temperature of a node upon traversal by an amount equal to $\triangle T_{d}$). This model however does not have the dissipating properties.\\
The following is a mention in brief of the famous result we shall be using throughout and related definitions:\\
\begin{defn}
\label{capacity of a set}
The capacity $C_{M}$ of a set $M$ is defined as the sum of the capacities of all the nodes of that set. That is, 
\begin{equation}
C_{M}=\sum _{v_{i} \in M} c_{v_{i}} .
\end{equation}
\end{defn}

\begin{defn}\label{min-cut}
A min node-cut-set or minimum node-cut-set is defined as the node-cut-set whose capacity is less than or equal to the capacity of any other node-cut-set, where capacity of any node-cut-set is given by Definition ~\ref{capacity of a set}.
\end{defn}

\begin{theorem} \label{max-flow-min-cut}
\textbf{Max-Flow Min-Cut Theorem} \cite{bondy-murthy} : In any network, the value of a maximum flow is equal to the capacity of a minimum cut.
\end{theorem}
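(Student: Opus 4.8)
The plan is to deduce this node-capacitated form of the theorem from the classical edge-capacitated Max-Flow Min-Cut theorem \cite{bondy-murthy} by means of the standard node-splitting reduction. First I would construct an auxiliary edge-capacitated network $G'$: each node $v_i$ of $G$ with $v_i \notin \{s,t\}$ is replaced by two nodes $v_i^{\mathrm{in}}$ and $v_i^{\mathrm{out}}$ joined by a directed edge $v_i^{\mathrm{in}} v_i^{\mathrm{out}}$ of capacity $c_{v_i}$; every original edge $v_i v_j$ is replaced by an edge $v_i^{\mathrm{out}} v_j^{\mathrm{in}}$ of capacity $B$ for some number $B > \sum_i c_{v_i}$ (so that these edges can never be saturated and are never worth cutting); and $s$ and $t$ are left unsplit. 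Merging each pair $v_i^{\mathrm{in}}, v_i^{\mathrm{out}}$ back into $v_i$ shows that $s$--$t$ flows in $G'$ in the ordinary edge sense are in value-preserving correspondence with $s$--$t$ flows in $G$ that push at most $c_{v_i}$ units through each $v_i$; hence $G$ and $G'$ have equal maximum flow value.

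Next I would establish a capacity-preserving bijection between the node-cut-sets of $G$ (Definition~\ref{node-cut-set}) and the finite-capacity $s$--$t$ edge cuts of $G'$. Since every ``original'' edge of $G'$ has capacity $B > \sum_i c_{v_i}$, any $s$--$t$ edge cut of $G'$ of capacity at most $\sum_i c_{v_i}$ consists solely of internal edges $v_i^{\mathrm{in}} v_i^{\mathrm{out}}$; conversely, a set $M$ of nodes of $G$ separates $s$ from $t$ if and only if the corresponding set of internal edges $\{\, v_i^{\mathrm{in}} v_i^{\mathrm{out}} : v_i \in M \,\}$ separates $s$ from $t$ in $G'$. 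Under this correspondence the capacity of the edge cut equals $\sum_{v_i \in M} c_{v_i} = C_M$ in the notation of Definition~\ref{capacity of a set}; therefore a minimum edge cut of $G'$ corresponds to a minimum node-cut-set of $G$ in the sense of Definition~\ref{min-cut}, and the two minimum capacities are equal.

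Finally I would apply the classical Max-Flow Min-Cut theorem to the edge-capacitated network $G'$, obtaining that its maximum flow equals the minimum capacity of an $s$--$t$ edge cut. Chaining this with the two facts established above --- equality of maximum flow values between $G$ and $G'$, and equality of the two minimum cut capacities --- yields that the maximum $s$--$t$ flow in $G$ equals $\min_M C_M$ taken over all node-cut-sets $M$, which is exactly the assertion.

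The step I expect to demand the most care is the bijection in the second paragraph: one must verify that some minimum $s$--$t$ edge cut of $G'$ avoids every capacity-$B$ edge, that the internal edges it retains correspond to a node set whose removal genuinely disconnects $s$ from $t$ in $G$ (rather than merely blocking some but not all $s$--$t$ routes), and that the boundary conventions are consistent --- namely that $s$ and $t$ are never split and, by Definition~\ref{node-cut-set}, never belong to a node-cut-set. A more self-contained alternative would be to run Ford--Fulkerson~\cite{ford-fulkerson} directly on $G$ with node capacities: define the residual network so that a node becomes blocked once $c_{v_i}$ units have traversed it, and show that when no augmenting $s$--$t$ path remains, the blocked nodes lying on the frontier of the vertex set still reachable from $s$ form a node-cut-set whose capacity equals the current flow; but the node-splitting reduction is shorter and lets us invoke the cited theorem verbatim.
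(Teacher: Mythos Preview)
Your proposal is correct, but there is nothing to compare it against: the paper does not prove Theorem~\ref{max-flow-min-cut}. It is stated with a citation to \cite{bondy-murthy} and used as a black box throughout; no proof environment follows it. The node-splitting reduction you describe does appear in the paper (Section~\ref{system-model}, ``Node-Splitting Technique''), but there it is presented only as a computational device for applying Ford--Fulkerson to node-capacitated instances, not as a derivation of the min-cut equality. Your argument is the standard and correct way to obtain the node-capacitated statement from the edge-capacitated one, and your caveats about the second paragraph (ensuring the minimum edge cut avoids the large-capacity edges, and that $s,t$ are excluded from node-cut-sets) are the right places to be careful.
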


Table~\ref{table2} summarizes the notation used.

\begin{table}[htbp] 
\centering
\begin{tabular}{lp{10cm}} 
\toprule
\head{Symbol} & \head{Description} \\
\midrule
$v_{i}$ & $i^{th}$ node of the network $G$. \\
$\theta_{0i}$ & Initial/base temperature of the node $v_{i}$, which is also equal to its minimum possible temperature. \\
 $\theta_{ci}$ & Critical temperature of node $v_{i}$. The node $v_{i}$ ceases to function above this temperature. \\
 $c_{i}$ & Capacity of node $v_{i}$. \\
 $s$ & Source of the flow; the packets originate from this node.\\
 $t$ & Sink of the flow.\\
 $\triangle T_{u}$ & The amount by which a packet increases the temperature of a node upon traversal.\\
 $\omega$ & The amount by which the temperature of a node decreases per unit time.\\
 $\triangle T_{d}$ &  The temperature by which a cooling packet decreases the temperature of a node upon traversal subject to conditions mentioned in Section~\ref{max using cooling packets}.\\
 $\bar{f}$ & Rate of flow of packets.\\
 $M$ & A node-cut-set of network $G$. Since a network can have many node-cut-sets, we shall refer to them as $M_{i}$ throughout the text.\\
 $W$ & The set of walks from $s$ to $t$ via the nodes of the node-cut-set for which the corresponding walk set is defined.\\
 $C_{M_{i}}$ & Capacity of the $i^{th}$ node-cut-set, which is equal to the sum of capacities of all nodes that belong to the set $M_{i}$.\\
 $\tau$ & The amount of time for which the network is given rest to dissipate heat and become functional again.\\
$\beta$ & Cooling capacity of a cooling packet.\\
\bottomrule
\end{tabular} \caption{Notation} \label{table2}
\end{table}

\subsection{System Model}
\label{system-model}
Given a network $G$, with nodes denoted by $v_{i}$, having base and critical temperatures $\theta_{0i}$ and $\theta_{ci}$, our problem is to maximize the number of packets traveling from source $s$ to the sink $t$ until the network becomes dysfunctional. The packets have the property that they increase the temperature of a node by an amount equal to $\triangle T_{u}$ units upon traversal.\\\\
\textbf{Constraints}:\\
The lower and upper limits on the temperature of the nodes imposes a constraint on the number of packets that can traverse that node. Let us denote the maximum number of packets that can traverse a node $v_{i}$ before $v_{i}$ becomes dysfunctional by $c_{i}$, the capacity of the $i^{th}$ node. Let $n$ packets be able to cross node $i$ before it gets dysfunctional. A packet increases the temperature of a node by $\triangle T_{u}$ upon traversal, which gives:\\
\begin{equation}
n \triangle T_{u} \le \theta_{ci}-\theta_{0i}
\end{equation}\begin{equation}
n \le \frac {\theta_{ci} - \theta_{0i}}{\triangle T_{u}} .
\end{equation}
Since $n$ denotes the number of packets, it has to be an integer. So, the maximum value $n$ can attain is:
\begin{equation}
n_{max} = \lfloor \frac{\theta_{ci} - \theta_{0i}}{\triangle T_{u}} \rfloor .
\end{equation}
This $n_{max}$ is in fact the capacity of node $i$ by definition. So, 
\begin{equation}
c_{i} = \Big\lfloor \frac{\theta_{ci} - \theta_{0i}}{\triangle T_{u}} \Big\rfloor .
\end{equation}
There are no such temperature restrictions on the edges. Also, the number of packets that can  be dispatched from the source or that can get into the sink at any instant do not constrain the number of packets traveling through the network. This means that as many packets as the network can allow through it at any instant can be dispatched by the source and can get absorbed into the sink.
The problem- to maximize the number of packets that can travel from source $s$ to sink $t$ through network with capacity constraints on nodes has already been solved by modifying the network(will be explained below) and applying the Ford-Fulkerson algorithm~\cite{ford-fulkerson}. Nevertheless, we mention it here in full details as it will be referred to in further analysis of more complicated networks. \\

\noindent\textbf{Node-Splitting Technique} \cite{cormen} \cite{nsdeo}: \\
This technique of node-splitting is often used for spot programming when solving flow questions having flow limitations on the nodes. Every node $v_i$ is split into two nodes $v.r_i$ ($v_i$ right) and $v.l_i$ ($v_i$ left). These two nodes are joined using a directed edge from $v.l_i$ to $v.r_i$. All edges incident into the erstwhile node $v_i$ now made incident into the node $v.l_i$ and all edges incident out of $v_i$ are made incident out of the node $v.r_i$. The directed edge joining $v.l_i$ and $v.r_i$ is given a capacity equal to the capacity of the node $v_i$. All edges of the original network are given an infinite capacity (assuming no limit exists on the flow through the edges). This transforms the node-limited flow problem to the familiar edge-limited flow problem which can be easily solved using the Ford Fulkerson or other max-flow algorithms.

\section{Dissipating Model}
\label{dissipating-model}
The base model provides an elementary yet important starting point for the analysis of much more complicated yet interesting networks, one such model being the dissipating model. A dissipating network is fundamentally the base model with nodes exhibiting certain special characteristics. These nodes have a special property of self-repair. This is done by dissipating heat with time, that is, the nodes, if given some time, lose out some of the heat, thereby cooling themselves sufficiently below the critical temperature, which makes them functional again. However, a node cannot cool itself down further beyond its base temperature.\\
The rate of dissipation will be denoted by $\omega$, i.e. a node cools down by  $\omega$ units temperature per unit time. \\\\
\begin{note} We are not discretizing time, for the sake of practicality. This means that the network can be given rest for any amount of time, not necessarily integral values. Or, equivalently, we can say that it is not necessary that a node’s temperature be reduced only by an amount which is an integral multiple of $ \omega $.
\end{note}
The problem , to maximize the rate of flow of packets from source $s$ to sink $t$ through the dissipating model, is tackled in parts, wherein the first part deals with a uniform network (Definition ~\ref{uniform}) and the second part deals with a non-uniform network.\\
\subsection{Uniform Network}
For a complete understanding of the dynamic behavior of the uniform network with dissipation, a complete analysis including both transient and steady state analysis will be performed for the problem of maximization of the rate of flow of packets in the following sub-sections. Where on one hand the transient state analysis provides insight into the dynamically changing packet flow through the network, the steady state analysis illustrates the ultimate state the network achieves, that doesn’t change with time.\\
\subsubsection{Transient State Analysis}
\label{transient uniform}
As a consequence of the assumption that the flow of packets from $s$ to $t$ requires no time, at time $\tau = 0$, no dissipation occurs while the packets travel through the network. Therefore before any dissipation occurs, maximum possible number of packets would have already had traversed the network, thereby making it dysfunctional. This stage at $\tau=0$ is then no different from the base model. Hence, the maximum flow is given by the max-flow min-cut theorem on the base model.

The following proposition speaks about what the minimum node-cut-set is going to be:\\
\begin{lemma} \label{min-cardinality}
The minimum node-cut-set in a uniform network is the one with minimum cardinality, where cardinality of a set refers to the number of nodes in the set.
\end{lemma}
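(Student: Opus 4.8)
The plan is to reduce the statement to a one-line computation by unwinding the two relevant definitions. First I would invoke Definition~\ref{uniform}: in a uniform network every node has the same capacity, so I may write $c_{v_i} = c$ for all nodes $v_i$ and some fixed value $c$. Then, for an arbitrary node-cut-set $M$, Definition~\ref{capacity of a set} gives
\begin{equation}
C_M = \sum_{v_i \in M} c_{v_i} = \sum_{v_i \in M} c = |M|\, c ,
\end{equation}
so that the capacity of any node-cut-set is just its cardinality scaled by the common constant $c$.

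Next I would observe that $c$ is a fixed positive quantity (a node of capacity $0$ would render the network dysfunctional from the outset, so we may take $c \ge 1$). Multiplication by a positive constant is order-preserving, so for any two node-cut-sets $M_1, M_2$ we have $C_{M_1} \le C_{M_2}$ if and only if $|M_1| \le |M_2|$. Hence a node-cut-set attains the minimum capacity over all node-cut-sets exactly when it attains the minimum cardinality over all node-cut-sets, which is precisely the claim; this is also consistent with Definition~\ref{min-cut}, whose use of ``$\le$'' already permits several node-cut-sets to tie as minimum.

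The only points needing a word of care are the degenerate case $c = 0$ and the possibility of ties, and neither is a genuine obstacle: in the former the network carries no flow and every node-cut-set is trivially minimum, while in the latter Definition~\ref{min-cut} already accommodates multiple minimizers. So I do not expect any hard step here — the lemma is essentially the additive definition of cut capacity specialized to equal node capacities, and all the content is in the elementary bookkeeping above.
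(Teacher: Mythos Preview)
Your argument is correct and matches the paper's own proof: both compute $C_{M_i}=c\,|M_i|$ from the uniformity assumption and then observe that minimizing capacity is equivalent to minimizing cardinality. Your version is slightly more explicit about the positivity of $c$ and the handling of ties, but the substance is the same.
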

\begin{proof}
By Definition~\ref{min-cut}, the min node-cut-set of network $G$ is the node-cut-set of $G$ with minimum capacity (the capacity of a set being the sum of capacities of the nodes in the set).  Therefore, here the
min node-cut-set is the set $M_{i}$, where $i$ is such that $C_{M_{i}}$ is minimized.\\
\[C_{M_{i}}= \sum _{j: v_{j} \in M_{i}} c_{j} = c \vert M_{i}\vert ,\]
where $\vert M_{i} \vert$ denotes cardinality of the set $M_{i}$, i.e. the set of nodes in the set $M_{i}$.\\
Without loss of generality, let $i=k$ for which $C_{M_{i}}$ is minimum (i.e. let $M_{k}$ be the min node-cut-set). That is:\\
\begin{equation}
\label{cmin}
C_{M_k}=\min_i C_{M_{i}} = \min_i c \vert M_{i} \vert = c \min_i \vert M_{i} \vert\end{equation}

But 
\begin{equation}
\label{cmk}
C_{M_{k}}= c \vert M_{k} \vert .
\end{equation}
Therefore, from ~(\ref{cmin}) and ~(\ref{cmk}):\\
$c \min_{i} \vert M_{i} \vert = c \vert M_{k} \vert$, which means $\vert M_{k} \vert = \min_{i} \vert M_{i} \vert$.\\
This tells us that the min node-cut-set is the one with minimum cardinality, which completes our proof. 
\end{proof}
After flow $f= c \vert M_{i} \vert$, the nodes of the node-cut-set $M_{k}$ become dysfunctional, thereby disconnecting the network ($M_{k}$ being a node-cut-set). To revive the network again, the network needs sufficient time to dissipate the heat and become functional again. Let the network be given rest for $\tau$ units of time.
\begin{note} It is to be observed that the network has again reached the same state as previously, where we had a disconnected network, which was then given $\tau$ units of rest, again a maximum possible number of packets pass through the network, and it becomes disconnected. We shall call this one cycle as one \emph{stage}. So, technically, a stage of a network is the state it goes into after it has been revived after giving some rest, and allows a certain number of packets to pass through it before becoming dysfunctional again.
\end{note}
The only parameter that might vary is $\tau$. However, it does not impact the state the network is in, as is established by the following result:\\
\begin{theorem} \label{min cut set same}  In a uniform network with dissipation, the minimum node-cut-set is going the same throughout.
\end{theorem}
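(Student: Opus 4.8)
The plan is to reduce the claim to Lemma~\ref{min-cardinality} together with the elementary observation that the \emph{cardinality} of a node-cut-set is fixed once and for all by the graph $G$: dissipation and the passage of packets change only node temperatures, hence capacities, never the edge set nor which vertex subsets separate $s$ from $t$. So I would first fix, via Theorem~\ref{max-flow-min-cut} and Lemma~\ref{min-cardinality} applied to the stage at $\tau=0$ (which is just the base model), a node-cut-set $M_k$ with $|M_k|=\min_i|M_i|$, and then prove by induction on the stage index that $M_k$ is a minimum node-cut-set in every stage. The invariant carried through the induction is: \emph{at the start of each stage the nodes of $M_k$ all share a common capacity $c^{\ast}$, and no node of $G$ has capacity less than $c^{\ast}$} (equivalently, in the uniform reading where all nodes have the common base and critical temperatures $\theta_{0}$ and $\theta_{c}$, the nodes of $M_k$ are all at one temperature $\theta^{\ast}$ and every other node is at temperature at most $\theta^{\ast}$).

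Granting the invariant at a given stage, the conclusion there is immediate: for any node-cut-set $M_i$, $C_{M_i}=\sum_{v\in M_i}c_v\ge |M_i|\,c^{\ast}\ge |M_k|\,c^{\ast}=C_{M_k}$, the first inequality from the invariant and the second because $M_k$ has minimum cardinality; hence $M_k$ is a minimum node-cut-set. The base case $\tau=0$ holds trivially, every node being at base temperature with capacity $c=\lfloor(\theta_{c}-\theta_{0})/\triangle T_u\rfloor$. For the inductive step: since $M_k$ is a minimum node-cut-set of the current stage, the maximum flow of that stage equals $C_{M_k}$ and so saturates $M_k$, driving its nodes to critical temperature while, by capacity constraints, every other node ends the stage no hotter than that; resting for $\tau$ units then lowers each temperature by the common amount $\omega\tau$, floored at base temperature, and since $x\le y$ implies $\max(\theta_{0},x-\omega\tau)\le\max(\theta_{0},y-\omega\tau)$, the nodes of $M_k$ again end up at a common temperature with all other nodes at most as hot --- exactly the invariant for the next stage.

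The main obstacle is that, past the first stage, the network is no longer uniform (a node that carried less than full flow sits strictly below critical afterward), so Lemma~\ref{min-cardinality} cannot be applied verbatim; the invariant is the device that salvages it, retaining just the feature that makes $M_k$ minimum --- its nodes are simultaneously of least cardinality and of least capacity. Two points need care in the inductive step. First, a node outside $M_k$ may in some stage be \emph{forced} by the graph structure to carry more packets than an $M_k$-node, so one must check it can never overtake $M_k$ in temperature over the long run; this should hold because its temperature is at most critical at the end of every stage while $M_k$ is exactly at critical, and cooling is uniform. Second, the floor in the capacity formula means a ``saturated'' node need not sit precisely at $\theta_{c}$, so the invariant is better phrased with the realized post-stage temperature of the $M_k$-nodes than with $\theta_{c}$ itself; this is bookkeeping and does not affect the structure. (If several node-cut-sets tie for minimum cardinality, one tracks the whole family, which is saturated and cools identically each stage, so the common value $C_{M_k}$ and the set of minimum node-cut-sets are preserved throughout.)
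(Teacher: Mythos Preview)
Your argument is correct, and it proceeds by induction on stages just as the paper does, but the invariant you carry is different. The paper tracks \emph{cut-set capacities}: it writes $C^{(n+1)}_{M_i}=C^{(n)}_{M_i}-C^{(n)}_{M_k}+\frac{\tau\omega}{\triangle T_u}|M_i|$ and argues that since both $C^{(n)}_{M_i}$ (inductive hypothesis) and $|M_i|$ (choice of $M_k$) are minimised at $i=k$, so is their sum. You instead track a \emph{node-level} invariant --- every node of $M_k$ has a common capacity $c^{\ast}$ and no node anywhere has smaller capacity --- and recover the cut-set comparison from $C_{M_i}\ge |M_i|\,c^{\ast}\ge |M_k|\,c^{\ast}=C_{M_k}$.

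What each buys: your approach is more robust to the boundary effects you flag (the floor in the capacity formula, and the base-temperature cap during cooling), because the step ``$x\le y\Rightarrow\max(\theta_0,x-\omega\tau)\le\max(\theta_0,y-\omega\tau)$'' absorbs the capping cleanly, whereas the paper's additive formula $R^{(n)}_{M_i}+\frac{\tau\omega}{\triangle T_u}|M_i|$ tacitly assumes no node of $M_i$ hits its base temperature. The paper's approach, on the other hand, yields explicit closed-form expressions for $C^{(n)}_{M_i}$ that feed directly into the subsequent results (Corollary~\ref{max flow same} and the transient analysis of the non-uniform case). One small point: your invariant is really a statement about capacities rather than temperatures, since ``uniform'' in Definition~\ref{uniform} means equal capacities, not equal $\theta_{0_i},\theta_{c_i}$; your argument already works at that level, so you can simply drop the parenthetical temperature reading.
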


\begin{proof}

We shall prove the result using induction. We shall first prove that the min node-cut-set is the same for Stage 1 and 2 (let it be $M_{k}$). We assume that the min node-cut-set is $M_{k}$ in Stage $n$. Then, if we are able to prove that in Stage $n+1$ as well, $M_{k}$ will be the min node-cut-set, we would be done.

Let us consider these stages one by one as follows:

\emph{Stage 1:} The network initially has all its nodes working at capacity c.  Then, the capacity of a node-cut-set $M_{i}$ being the sum of capacities of its nodes, we get:\\
$C^{(1)}_{M_{i}}=c \vert M_{i} \vert$\\
Min node-cut-set being the node-cut-set with minimum capacity, can be obtained by minimizing $C_{M_{i}}$ over all node-cut-sets $M_{i}$, i.e. \\
\[\min_i C^{(1)}_{M_{i}} = \min_{i} c \vert M_{i} \vert = c \min_i \vert M_{i} \vert = c \vert M_{k} \vert,\]
(where $M_{k}$ is assumed to be the node-cut-set with minimum cardinality without loss of generality).\\
So, the min node-cut-set in Stage 1 is the set $M_{k}$, which is the node-cut-set with minimum cardinality. The min node-cut-set in this Stage could also have been obtained directly by applying Lemma~\ref{min-cardinality}.\\

After a max flow \(f=c \vert M_{k} \vert\) by max-flow min-cut theorem), the network becomes disconnected. The residual capacity $R_{M_{i}}$ of a node-cut-set $M_{i}$ being the capacity left after a transfer of a certain number of packets through the network is given by:
\[R^{(1)}_{M_{i}}=C^{(1)}_{M_{i}}-f=C^{(1)}_{M_{i}}-C^{(1)}_{M_{k}} = c (\vert M_{i} \vert - \vert M_{k} \vert).\]

Let the network be given sufficient amount of rest (say for time $\tau$ units) so that the network becomes functional again. Then, this revived state is the Stage 2 of the network.

\emph{Stage 2:} The improved capacities ($C^{(2)}_{M_{i}}$) of the node-cut-sets after $\tau$ units of rest are:
\begin{eqnarray*}
C^{(2)} _{M_{i}} & = & R^{(2)} _{M_{i}} + \frac{\tau \omega}{\triangle T_{u}} \vert M_{i} \vert\\
& = & c \big(\vert M_{i} \vert - \vert M_{k} \vert \big) + \frac{\tau \omega}{\triangle T_{u}} \vert M_{i} \vert	,	 \forall i .
\end{eqnarray*}

Proceeding as in Stage 1, the min node-cut-set for this state of the network is obtained for that value of $i$ for which the capacity of the node-cut-set is minimum.
\begin{eqnarray*}
\min_{i}C^{(2)} _{M_{i}} & = & \min_i \Big(c \big(\vert M_{i} \vert - \vert M_{k} \vert \big) + \frac{\tau \omega}{\triangle T_{u}} \vert M_{i} \vert \Big)\\ 
& = & \min_{i} \Big(c \vert M_{i} \vert + \frac{\tau \omega}{\triangle T_{u}} \vert M_{i} \vert \Big) -c \vert M_{k} \vert
\end{eqnarray*}

which is minimum for $\min_i \vert M_{i} \vert$, i.e., the node-cut-set
with minimum cardinality, which is $M_{k}$ as per the assumption made
in Stage 1.

Now it is established that the min node-cut-set is the same for Stages 1
and 2. Next, assume that $M_{k}$ is the min node-cut-set for Stage $n$.\\

\emph{Stage $n$:} Assume that the min node-cut-set in this stage is
$M_{k}$.  Let $C^{(n)}_{M_{i}}$ denote the improved capacity of the node-cut-sets in this stage.

Since $M_{k}$ is the min node-cut-set, the max flow is equal to $C^{(n)}
_{M_{k}}$ (by the max-flow min-cut theorem) and
\begin{equation} 
C^{(n)} _{M_{k}} \le C^{(n)} _{M_{i}}  ,  \forall i .
\end{equation}

After this amount of flow takes place through the network, the network is disconnected with the residual capacities of the node-cut-sets being:\\
$R^{(n)}_{M_{i}}=C^{(n)}_{M_{i}}-C^{(n)}_{M_{k}} .$\\

Again, we revive this network by giving it sufficient time (let it be
$\tau$ units) to be functional again. This increases the capacities of
the nodes by $\frac{\tau \omega}{\triangle T_{u}}$ and makes
the network functional again. In this next stage of the network:

\emph{Stage n+1:} 
\begin{eqnarray*}
C^{(n+1)} _{M_{i}} & = & R^{(n)} _{M_{i}} + \frac{\tau \omega}{\triangle T_{u}} \vert M_{i} \vert \\
& = & C^{(n)}_{M_i} - C^{(n)}_{M_k} + \frac{\tau \omega}{\triangle T_{u}} \vert M_{i} \vert.
\end{eqnarray*}

The minimum node-cut-set is the one with minimum capacity (minimum of all the node-cut-sets). Minimizing $C^{n+1}_{M_{i}}$ over $i$, we get:
\begin{eqnarray*}
\min_i C^{(n+1)}_{M_i} & = & \min_i \Big(C^{(n)}_{M_i} - C^{(n)} _{M_k} + \frac{\tau \omega}{\triangle T_u} \vert M_i \vert \Big) \\
& = & \min_i \Big(C^{(n)}_{M_i} + \frac{\tau \omega}{\triangle T_u} \vert M_i \vert \Big) - C^{(n)}_{M_k} .
\end{eqnarray*}

We know that $C^{(n)} _{M_{i}}$ is minimum for $i=k$ (by the assumption in
Stage $n$) and $\vert M_{i} \vert$ is also minimum for $i=k$ (by
assumption).

So, $C^{(n)} _{M_{i}} + \frac{\tau \omega}{\triangle T_{u}}
\vert M_{i} \vert$ is minimum for $i=k$.  This means that
$C^{(n+1)}_{M_{i}}$ is minimum for $i=k$, or $M_{k}$ is the min node-cut-set for Stage $n+1$.
Hence, we have proved by induction that the min node-cut-set is the same in all the Stages in case of a uniform network with dissipation.
\end{proof}

\begin{corollary} \label{max flow same} 
The rate of maximum flow in case of a uniform network with dissipation is the same in every stage.
\end{corollary}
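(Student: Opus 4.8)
The plan is to deduce the statement directly from Theorem~\ref{min cut set same}, the max-flow min-cut theorem (Theorem~\ref{max-flow-min-cut}), and the capacity recursion established inside the proof of Theorem~\ref{min cut set same}. By Theorem~\ref{min cut set same} the minimum node-cut-set is one and the same set $M_k$ (the node-cut-set of least cardinality) in every stage, so by Theorem~\ref{max-flow-min-cut} the flow admitted by the network during Stage $n$ equals $C^{(n)}_{M_k}$, the capacity of $M_k$ at the moment that stage begins. Hence it is enough to show that $C^{(n)}_{M_k}$ is independent of $n$ for the recurring stages.

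To do this I would specialize to $i=k$ the recursion
\[
C^{(n+1)}_{M_i} = C^{(n)}_{M_i} - C^{(n)}_{M_k} + \frac{\tau \omega}{\triangle T_u}\,\vert M_i\vert
\]
that appears in the proof of Theorem~\ref{min cut set same}, obtaining $C^{(n+1)}_{M_k} = \frac{\tau \omega}{\triangle T_u}\,\vert M_k\vert$ for every $n\ge 1$; the two $C^{(n)}_{M_k}$ terms cancel irrespective of their common value, so from the second stage onward every stage passes exactly $\frac{\tau \omega}{\triangle T_u}\,\vert M_k\vert$ packets. Since a stage occupies $\tau$ units of time (the flow itself being instantaneous by assumption), dividing gives a per-stage rate of $\frac{\omega \vert M_k\vert}{\triangle T_u}$, the same in every stage --- and, incidentally, not even dependent on the chosen rest time $\tau$.

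There is no genuinely hard step; this is a corollary in the literal sense, and the cancellation above is the whole of it. The only points I would be careful to flag are, first, that Stage~1 is exceptional --- its nodes start at full capacity $c$, so the first batch has value $c\vert M_k\vert$, and the claim should be read for the recurring (steady-state) stages, which is precisely the regime in which ``rate of flow'' is meaningful; and second, that the recursion quietly assumes the post-rest capacities are not clipped at $c$, so I would either restrict attention to values of $\tau$ with $\frac{\tau\omega}{\triangle T_u}\le c$ or state explicitly that the argument concerns that idealization, leaving the exact admissible value of $\tau$ to Theorem~\ref{value of tau}.
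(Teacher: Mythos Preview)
Your proposal is correct and follows essentially the same route as the paper: invoke Theorem~\ref{min cut set same} to fix $M_k$ as the minimum node-cut-set in every stage, use max-flow min-cut to identify the stage-$n$ flow with $C^{(n)}_{M_k}$, observe that the residual capacity of $M_k$ after each stage is zero (your cancellation $C^{(n)}_{M_k}-C^{(n)}_{M_k}=0$ is exactly this), so the next stage's capacity is $\frac{\tau\omega}{\triangle T_u}\vert M_k\vert$, and then divide by $\tau$. Your remarks about Stage~1 being exceptional and about clipping at $c$ are sensible caveats that the paper leaves implicit.
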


\begin{proof} 

Lemma~\ref{min cut set same} states that the minimum node-cut-set in case of Uniform Network is going to be the same in every stage. Let this node-cut-set be denoted by $M_{k}$. Then, by Max-Flow Min-Cut Theorem, the maximum flow $f$ should be equal to $C_{M_{k}}$.

Next, we need to show that this capacity is the same in each
stage. Consider $i^{th}$ stage of the network. Since in the
$(i-1)^{th}$ stage as well, the min node-cut-set would have been $M_{k}$ (by
Lemma~\ref{min cut set same}), the residual capacity of this node-cut-set after maximum flow would have become zero. Therefore, after $\tau$
units of rest, the improved capacity $C^{(i)}_{M_{k}} = R^{(i-1)}_{M_{k}}
+ \frac{\tau \omega}{\triangle T_{u}} \vert M_{k} \vert =
\frac{\tau \omega}{\triangle T_{u}} \vert M_{k} \vert$, which
is also going to be the maximum flow, $f$, through the network in this
stage ($M_{k}$ being the node-cut-set), i.e.,

\begin{equation}\label{max flow value}
f^{(i)}=\frac{\tau \omega}{\triangle T_{u}} \vert M_{k} \vert .
\end{equation}

It can be easily seen that the right hand side of the equation is
independent of the stage ($i$), which means that for $i^{th}$ stage (where $i$ is arbitrary), the capacity of the min node-cut-set, and hence the max flow is $\frac{\tau
  \omega}{\triangle T_{u}} \vert M_{k} \vert$.\\
So, the rate of flow in the $i^{th}$ stage, $\bar{f}^{(i)}$
becomes:
\[\bar{f}^{(i)} = \frac{f}{\tau} = \frac{\omega}{\triangle T_{u}} \vert M_{k} \vert \, (\mathrm{using} ~(\ref{max
  flow value})) .\]
It is to be noted that the right hand depends
neither on the stage, nor the value of $\tau$, and is thus a constant
quantity.  Hence, it is proved that the rate of maximum flow
is the same at every stage. \qedhere

\end{proof}

\begin{note} The corollary suggests that the rate of maximum flow
in case of a uniform network with dissipation is the same
throughout, which means that the network has reached the steady
state. However, it must be noted that the rate of this maximum flow
might not necessarily be the maximum rate of flow and jumping to this
conclusion might be wrong even though it seems intuitively
correct. So, what should be the value of flow and corresponding value
of $\tau$ such that the rate of flow is maximized is to be discussed
next.
\end{note}
\subsubsection{Steady State Analysis}

The corollary suggests that the rate of max flow through the network
remains the same throughout, which means that this state is indeed the
steady state of the network.

Let the rate of flow in the steady state be denoted by
$\bar{f}$. Then,
$$\bar{f}=c\frac{\vert M_{k} \vert}{\tau} .$$

The next problem is to find out the value of $\tau$ such that we obtain the maximum rate of flow of packets from $s$ to $t$ through the network.

\begin{note}
Intuitively, it seems that $\tau $ should not be so less that the
nodes of the node-cut-sets are not even able to cool down even $\triangle
T_{u}$ units, as then the capacity (the number of packets that can
traverse a node) will remain zero, and the network disconnected. Also,
$\tau$ should not be too large so that some of the nodes reach their
base temperature due to which they cannot cool down further and hence
the number of packets that can traverse through the network does not
rise as much as the time taken, which would eventually decrease the
average rate of flow. So, to be on a safer side, $\tau$ should be such
that the capacity of any node is increased by one, which means
\begin{equation} 
\tau \omega = \triangle T_{u} .
\end{equation}
\end{note}

The following result proves this claim thus establishing the
optimality of the solution:

\begin{theorem} \label{value of tau}

The rate of flow of packets in steady state through a network with dissipation is maximum for
\begin{equation} 
\tau =\frac{\triangle T_{u} }{\omega} .
\end{equation}
\end{theorem}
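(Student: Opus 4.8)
The plan is to make the dependence of the per-stage flow on $\tau$ explicit, paying attention to the fact that a node can be traversed only by an integer number of packets, and then to maximise the resulting one-variable function of $\tau$.

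By Theorem~\ref{min cut set same} and Corollary~\ref{max flow same}, in the steady state the minimum node-cut-set is a fixed set $M_{k}$ whose residual capacity is $0$ at the start of every stage; since residual capacities are non-negative, every node of $M_{k}$ begins the stage exactly at its critical temperature. After the network rests for $\tau$ units, such a node has cooled by $\tau\omega$ (as long as this does not exceed $\theta_{c}-\theta_{0}$, i.e.\ the node has not yet returned to its base temperature), so it can again admit $\lfloor \tau\omega/\triangle T_{u}\rfloor$ packets — the floor appearing precisely because the packet count is an integer, which is the point the continuous estimate in Corollary~\ref{max flow same} glosses over. Writing $\tau^{*}=(\theta_{c}-\theta_{0})/\omega$, the flow admitted in one stage is therefore $f(\tau)=\vert M_{k}\vert\,\lfloor \tau\omega/\triangle T_{u}\rfloor$ for $0<\tau\le\tau^{*}$ and $f(\tau)=\vert M_{k}\vert\,c$ for $\tau\ge\tau^{*}$, and the steady-state rate to be maximised is $\bar f(\tau)=f(\tau)/\tau$.

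Next I would substitute $x=\tau\omega/\triangle T_{u}$, reducing the problem to maximising $h(x)=\lfloor x\rfloor/x$ over $x>0$, together with a check on the capped regime. For $x\ge 1$, setting $n=\lfloor x\rfloor\ge 1$ gives $h(x)=n/x\le n/n=1$, with equality iff $x=n$; for $0<x<1$ we have $h(x)=0$. Hence for $0<\tau\le\tau^{*}$ one gets $\bar f(\tau)=\vert M_{k}\vert(\omega/\triangle T_{u})\,h(x)\le \vert M_{k}\vert\,\omega/\triangle T_{u}$, with equality exactly when $x$ is a positive integer. For $\tau>\tau^{*}$ the rate is $\vert M_{k}\vert c/\tau<\vert M_{k}\vert c/\tau^{*}=\vert M_{k}\vert c\,\omega/(\theta_{c}-\theta_{0})\le \vert M_{k}\vert\,\omega/\triangle T_{u}$, using $c=\lfloor(\theta_{c}-\theta_{0})/\triangle T_{u}\rfloor\le(\theta_{c}-\theta_{0})/\triangle T_{u}$. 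Thus $\bar f$ is bounded above by $\vert M_{k}\vert\,\omega/\triangle T_{u}$ over all $\tau>0$, and this bound is attained at $x=1$, i.e.\ at $\tau=\triangle T_{u}/\omega$ (which is feasible, since a usable node has $c\ge 1$, so $x=1\le c$); taking the least maximiser gives the claimed value $\tau=\triangle T_{u}/\omega$.

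The main obstacle is the first step: recognising that integrality of the packet count is what makes the choice of $\tau$ matter at all, so that the per-stage flow through $M_{k}$ is $\vert M_{k}\vert\lfloor\tau\omega/\triangle T_{u}\rfloor$ rather than the continuous quantity used in the corollary; once this is in place the optimisation is the elementary inequality $\lfloor x\rfloor/x\le 1$. Everything else is routine — in particular, the base-temperature cap only reinforces the conclusion, since the two regimes where $\bar f$ falls strictly below the optimum, namely $x<1$ (too little rest, capacity stays $0$) and $x>c$ (too much rest, nodes idle at base temperature), are exactly the two situations anticipated in the remark preceding the theorem.
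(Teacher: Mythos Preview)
Your argument is correct and follows essentially the same route as the paper: both proofs hinge on the elementary inequality $\lfloor x\rfloor \le x$ applied to $x=\tau\omega/\triangle T_{u}$, the paper via a direct case split on $\tau$ versus $\triangle T_{u}/\omega$ and you via the substitution $h(x)=\lfloor x\rfloor/x$. Your version is in fact slightly more complete than the paper's, since you treat the base-temperature cap $\tau>\tau^{*}$ explicitly (the paper ignores it) and you observe that every positive integer $x$ attains the bound, so that $\tau=\triangle T_{u}/\omega$ is the \emph{least} maximiser rather than the unique one.
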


\begin{proof} Since each packet increases the temperature of a node by $\triangle T_{u}$, upon traversal, when $\tau=\frac{\triangle T_{u}}{\omega}$, the increase in capacity, $\triangle c$, of each node is given by: 
$\triangle c = \lfloor \frac{\tau \omega}{\triangle T_{u}} \rfloor$\\
Substituting $\tau=\frac{\triangle T_{u}}{\omega}$:

\[\triangle c= \Big\lfloor \frac{\frac{\triangle T_{u}}{\omega}\omega}{\triangle T_{u}}\Big\rfloor = \lfloor 1 \rfloor = 1.\]
So, the capacity of the min node-cut-set $M_{k}$ is increased by $\vert M_{k} \vert$ units.
Thus, the rate of flow in each stage is given by:

\[\bar{f}= \frac{\vert M_{k} \vert}{\tau} = \frac{\vert M_{k} \vert}{\frac{\triangle T_{u}}{\omega}}\]

\begin{equation}
\label{compare}
\therefore \bar{f}=\frac{\vert M_{k} \vert \omega}{\triangle T_{u}} .
\end{equation}

It will suffice to prove that for any other value of $\tau$, the rate of flow cannot be greater than the rate of flow for $\tau =\frac{\triangle T_{u} }{\omega}$. Any other value of $\tau$ can either be greater than $\frac{\triangle T_{u} }{\omega}$ or less than $\frac{\triangle T_{u} }{\omega}$. Let us consider both these cases one by one:\\\\
\textbf{Case 1: $\tau < \frac{\triangle T_{u} }{\omega}$} \\
For this value of $\tau$, the temperature of all nodes is decreased by $\tau \omega < \triangle T_{u}$ (by using: $\tau < \frac{\triangle T_{u} }{\omega}$). \\
But for a packet increases the temperature of a node by $\triangle T_{u}$ upon traversal. So, the increase in capacity of a node, $\triangle c$, by giving $\tau$ units of rest is:
\begin{equation} \label{eq2.4-01}
\triangle c = \Big\lfloor \frac{\tau \omega}{\triangle T_{u}} \Big\rfloor.
\end{equation}
But,
\begin{equation}\label{eq2.4-02}
\frac{\tau \omega}{\triangle T_{u}} <  \frac{\frac{\triangle T_{u}}{\omega} \omega}{\triangle T_{u}} = 1.
\end{equation}

Using (\ref{eq2.4-01}) and (\ref{eq2.4-02}),
\begin{equation}
\triangle c =0
\end{equation}

So, for $\tau$ units of rest, where $\tau$ is such that $\tau < \frac{\triangle T_{u} }{\omega}$, the capacity of nodes is not increased. This means that the capacity of no node of the node-cut-set $M_{k}$ can be increased. Hence, the network remains disconnected and the rate of flow is going to be zero in every stage.\\
\textbf{Case 2: $\tau > \frac{\triangle T_{u} }{\omega}$}\\
For this value of $\tau$, the temperature of all the nodes will be decreased by $\tau \omega$ units. But, $\tau \omega > \triangle T_{u}$ (given for this case). Since a packet increases temperature of a node by $\triangle T_{u}$ units, the increase in capacity of each node, $\triangle c$ is given by:
\begin{equation}
\label{tau1}
\triangle c = \Big\lfloor \frac{\tau \omega}{\triangle T_{u}} \Big\rfloor
\end{equation}
But,
\begin{equation}
\label{tau2}
\frac{\tau \omega}{\triangle T_{u}} >  \frac{\frac{\triangle T_{u}}{\omega} \omega}{\triangle T_{u}} = 1
\end{equation}

Using ~(\ref{tau1}) and ~(\ref{tau2}), 
\begin{equation}
\triangle c \ge 1.
\end{equation}

The capacity of the node-cut-set $M_{k}$ is increased by $\triangle C_{M_{k}}$ such that:
\begin{equation}
\label{cap of Mk}
\triangle C_{M_k} = \triangle c \vert M_{k} \vert \ge \vert M_{k} \vert .
\end{equation}
Since $M_{k}$ remains the min node-cut-set throughout (by Lemma ~\ref{min cut set same}), the max flow $f$ is given by the capacity of $M_{k}$, which is given by ~\ref{cap of Mk}. Hence, the rate of flow, $\bar{f}$ becomes:
\begin{eqnarray*}
\bar{f} & = & \frac{f}{\tau}\\
& = & \frac{\triangle C_{M_{k}}}{\tau}\\
& = & \frac{\lfloor \frac{\tau \omega}{\triangle T_{u}} \rfloor \vert M_{k} \vert}{\tau} .
\end{eqnarray*}
Suppose, if possible, that this rate is greater than the rate of flow given by ~\ref{compare} when $\tau = \frac{\triangle T_{u} }{\omega}$, i.e.
$$\frac{\lfloor \frac{\tau \omega}{\triangle T_{u}} \rfloor \vert M_{k} \vert}{\tau} > \frac{\vert M_{k} \vert \omega}{\triangle T_{u}}$$
$$\Big\lfloor \frac{\tau \omega}{\triangle T_{u}} \Big\rfloor > \frac{\tau \omega}{\triangle T_{u}}$$
which is not possible. Hence, our supposition is wrong. Therefore, the rate of flow in this case will always be less than or equal to the case when $\tau = \frac{\triangle T_{u}}{\omega}$\\\\
The results from Cases 1 and 2 prove that the rate of flow is maximum when $\tau = \frac{\triangle T_{u}}{\omega}$.
\end{proof}
\begin{theorem}\label{value of flow uniform}
The maximum rate of flow in any stage is given by
\begin{equation}
\bar{f}=\frac{\vert M_{k} \vert \omega}{\triangle T_{u}}.
\end{equation}
\end{theorem}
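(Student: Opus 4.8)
The plan is to derive this statement essentially as a corollary of the two results already established for the uniform dissipating model, namely Corollary~\ref{max flow same} and Theorem~\ref{value of tau}. First I would recall, via Corollary~\ref{max flow same}, that once a rest time $\tau$ is fixed the rate of flow $\bar{f}$ is the same in every stage, so that the quantity ``maximum rate of flow in any stage'' is unambiguous: it is the maximum, over admissible choices of $\tau$, of the common per-stage rate $\bar{f}(\tau)$. By Theorem~\ref{min cut set same} the min node-cut-set is $M_{k}$ in every stage, so in steady state the residual capacity $R_{M_{k}}$ is $0$ after each stage's maximum flow, and after resting for $\tau$ units the capacity of $M_{k}$ — which is also the max flow $f$ in that stage — becomes $\big\lfloor \tfrac{\tau\omega}{\triangle T_{u}}\big\rfloor\,\vert M_{k}\vert$, giving $\bar{f}(\tau) = \tfrac{1}{\tau}\big\lfloor \tfrac{\tau\omega}{\triangle T_{u}}\big\rfloor\,\vert M_{k}\vert$.

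Next I would invoke Theorem~\ref{value of tau}, which says that $\bar{f}(\tau)$ is maximized precisely at $\tau = \tau^{\star} := \tfrac{\triangle T_{u}}{\omega}$. Substituting this value, the per-node capacity increase is $\big\lfloor \tfrac{\tau^{\star}\omega}{\triangle T_{u}}\big\rfloor = \lfloor 1 \rfloor = 1$, so the capacity of the min node-cut-set (hence the max flow in that stage, by Theorem~\ref{max-flow-min-cut}) is exactly $\vert M_{k}\vert$, and therefore
\[
\bar{f}(\tau^{\star}) = \frac{\vert M_{k}\vert}{\tau^{\star}} = \frac{\vert M_{k}\vert\,\omega}{\triangle T_{u}},
\]
which is equation~\eqref{compare} from the proof of Theorem~\ref{value of tau}. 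Combining this with the maximality asserted by Theorem~\ref{value of tau} yields the claimed value for the maximum rate of flow in any stage.

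I do not expect a genuine obstacle here, since the theorem is a bookkeeping consequence of the preceding results; the one point that needs a moment of care is the floor function. Specifically, one must check that at the optimal $\tau^{\star}$ the rounding is exact — $\lfloor 1\rfloor = 1$, so no capacity is lost to the floor — and that Theorem~\ref{value of tau} delivers an attained maximum rather than a mere supremum (Case~1 of that proof forces $\bar{f}=0$ for $\tau<\tau^{\star}$, and Case~2 shows $\bar{f}\le \tfrac{\vert M_{k}\vert\omega}{\triangle T_{u}}$ for $\tau>\tau^{\star}$, so the value at $\tau^{\star}$ is both admissible and optimal). With these observations noted, the proof is complete.
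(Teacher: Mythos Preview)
Your proposal is correct and follows essentially the same route as the paper: both derive the result by combining equation~(\ref{max flow value}) with Theorem~\ref{value of tau}, substituting the optimal $\tau=\triangle T_u/\omega$ into the per-stage rate. Your version is simply more explicit about the floor computation and about why the optimum is attained, whereas the paper compresses the argument to a single line.
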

\begin{proof}
The maximum rate of flow in any stage= (maximum flow $f$)/(minimum value of $\tau$ for obtaining the max flow $f$). Therefore,
\[\bar{f}=\frac{ \vert M_{k} \vert}{\frac{\triangle T_{u}}{\omega}} = \frac{\vert M_{k} \vert \omega}{\triangle T_{u}} \ \mathrm{(using \, (\ref{max flow value}) \, and \, Theorem~\ref{value of tau})}. \qedhere \]

\end{proof}

\subsection{Non-Uniform Network}
A non-uniform network differs from a uniform network in that in this case the capacities of all the nodes may be different. Let $c_{i}$  denote the capacity of node $v_i$.\\
Since the network is the same, the node-cut-sets are going to be the same, denoted by $M_{i}$, $i=1,2, \ldots, m$. The capacity of the set $M_{i}$ is the sum of the capacities of all its nodes, and is denoted by $ C^{j}_{M_{i}}$ for stage $j$. However for the sake of convenience, the initial capacities of the nodes and node-cut-sets (which are also the same in Stage 1), will be used without the superscript.   Also, after max-flow has taken place through the network, the remaining capacities of the node-cut-sets are denoted by $R^{j}_{M_{i}}$ for stage $j$ and that of the node by $r^{j}_{i}$ and are called \emph{residual capacities}. \\
On similar lines as the uniform network, the analysis for the transient and the steady states is done separately as follows.

\subsubsection{Transient State Analysis}
Using a similar argument as in Section~\ref{transient uniform}, at $\tau$ = 0, no dissipation has taken place so far and hence the network in Stage 1 is the same as the base model, i.e.,\\
\emph{Stage 1:} We have a network having nodes $v_{i}$ with capacities $c_{i}$ respectively. Applying the max-flow-min-cut theorem, the maximum flow is given by: \\
$f^{(1)}= \min_{i}C_{M_{i}}= C_{M_{k_1}}$, say (without loss of generality, the min node-cut-set in stage 1 is assumed to be $M_{k_1}$)\\

After the flow has taken place, the residual capacities of the node-cut-sets would be:
$R^{(1)}_{M_{i}}=C_{M_{i}}-f = C_{M_{i}}-C_{M_{k_1}}$
After $f$ packets travel from $s$ to $t$, the network becomes disconnected (as all the nodes of $M_{k_1}$ become dysfunctional). For the network to become connected, at least one of the nodes $\in M_{k_1}$ should become functional (i.e. the capacity of at least one node in $M_{k_1}$ should be at least $\triangle T_{u}$). So, we give the network $\tau$ units of rest such that:\\
$\tau \omega=\triangle T_{u}$ (This value of $\tau$ gives the max flow as well as the max rate of flow, by Theorem~\ref{value of tau}.)\\

\emph{Stage 2:} After giving the network $\tau$ units of rest, the new capacity of the $i^{th}$ node-cut-set (denoted by $C^{(2)}_{M_{i}}$) becomes:\\
$C^{(2)}_{M_{i}}=R^{(1)}_{M_{i}} +  \vert M_{i} \vert =  C_{M_{i}}-C_{M_{k_1}} + \vert M_{i} \vert$ \\
i.e. the capacity increases by 1 unit for each vertex, thereby increasing the capacity by $\vert M_{i} \vert $ units.\\
For this stage, applying the max-flow-min-cut theorem gives the max flow $f^{(2)}$ as follows:\\
\begin{eqnarray*}
f^{(2)} & = & \min_i( C^{(2)}_{M_i}) \\
& = & \min_i ( C_{M_i}-C_{M_{k_1}} + \vert M_{i} \vert) \\
& = & \min_i( C_{M_i}+ \vert M_{i} \vert) - C_{M_{k_1}} \\
& = & C_{M_{k_2}}+\vert M_{k_2} \vert) - C_{M_{k_1}},
\end{eqnarray*}
where we have assumed the set $k_2$ to be such that $C_{M_{i}}+ \vert M_{i} \vert$ is minimized.\\\\
After flow of $f^{(2)}$ units from the network, the residual capacity of the $ i^{th}$ node-cut-set becomes:
\begin{eqnarray*}
R^{(2)}_{M_i} & = & C^{(2)}_{M_i}-f^{(2)}\\
& = & C_{M_{i}} - C_{M_{k_1}} + \vert M_{i} \vert - (\min_i ( C_{M_i}+ \vert M_i \vert) - C_{M_{k_1}})\\ 
& = & C_{M_i} - C_{M_{k_1}} + \vert M_{i} \vert - (C_{M_{k_2}} + \vert M_{k_2} \vert - C^2_{M_{k_1}})\\
& = & C_{M_{i}} + \vert M_{i} \vert - C_{M_{k_2}} - \vert M_{k_2} \vert .
\end{eqnarray*}\\
\emph{Stage 3:} Proceeding in a similar way, after giving the network $\tau$ units of rest, the new capacity of the $i^{th}$ node-cut-set(denoted by $C^{(3)}_{M_{i}})$ becomes:\\
\begin{eqnarray*}
C^{(3)}_{M_i} & = & R^{(2)}_{M_i} +  \vert M_i \vert \\
& = &  C_{M_i} + \vert M_i \vert - C_{M_{k_2}} - \vert M_{k_2} \vert + \vert M_{i} \vert \\
& = & C_{M_i} + 2\vert M_i \vert - C_{M_{k_2}} - \vert M_{k_2} \vert.
\end{eqnarray*}
For this stage, applying the max-flow-min-cut theorem gives the max flow, say $ f^{3}$ as follows:
\begin{eqnarray*}
f^{3} & = & min_{i}( C^{3}_{M_{i}})\\ 
& = & min_{i}( C_{M_{i}} + 2\vert M_{i} \vert - C_{M_{k_2}} - \vert M_{k2} \vert)\\ 
& = & min_{i}( C_{M_{i}} + 2\vert M_{i} \vert) - (C_{M_{k_2}} + \vert M_{k_2} \vert)\\
& = & (C_{M_{k_3}}+2\vert M_{k_3} \vert) - C_{M_{k_2}} - \vert M_{k_2} \vert ,
\end{eqnarray*}
where we have assumed the set $k_3$ to be such that $C_{M_{i}}+ 2\vert M_{i} \vert$ is minimized.\\\\
After flow of $ f^{3}$ units from the network, the residual capacity of the $i^{th}$ node-cut-set becomes:
\begin{eqnarray*}
R^{(3)}_{M_i} & = & C^{(3)}_{M_i}-f^{(3)} \\
& = & C_{M_i} + 2\vert M_i \vert - C_{M_{k_2}} - \vert M_{k_2} \vert - (C_{M_{k_3}} + 2\vert M_{k_3} \vert - C_{M_{k_2}}-\vert M_{k_2} \vert) \\
& = & C_{M_{i}} + 2\vert M_{i} \vert - C_{M_{k_3}} - 2\vert M_{k_3} \vert .
\end{eqnarray*}
This analysis of transient state suggests that the min node-cut-set might vary from stage to stage unlike in case of uniform network. The next question is whether there exists any steady state for this kind of network or not, for which we analyze the state of the network as the number of stages increases under the steady state analysis.
\subsubsection{Steady State Analysis}
Continuing in the same way as in the previous section, suppose we reach the $n^{th} $ stage, where $n$ is some large number. Then, we have the following result which proves that there indeed exists a steady state for a non-uniform network with dissipation.

\begin{theorem} \label{after n stages}

The min node-cut-set is the same throughout after $n$ stages, where
\begin{displaymath}    
n = \left\{ \begin{array}{ll}    
 0 & \textrm{if \(C_{M_k} \leq C_{M_i}\)}, \forall i \ \mathrm{and} \\
 \max_i \Big( \frac{C_{M_k} - C_{M_i}}{\vert M_i \vert - \vert M_k \vert} \Big) & \textrm{otherwise.}               
               \end{array} \right. 
\end{displaymath}
where $M_k$ denotes the node-cut-set with minimum cardinality.
\end{theorem}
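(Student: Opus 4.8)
The plan is to identify exactly which functional the Stage-$n$ minimum node-cut-set optimizes, show that this functional's minimizer eventually becomes constant, and compute the threshold stage from the data of the first stage.

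\textbf{Step 1: the recurrence.} Generalizing the explicit computations already carried out for Stages $1$, $2$, $3$ in the transient analysis, I would prove by induction on $n$ that the improved (post-rest) capacity of every node-cut-set $M_i$ in Stage $n$ has the form $C^{(n)}_{M_i} = C_{M_i} + (n-1)\vert M_i\vert - K_n$, where $K_n$ is a quantity independent of $i$ (it depends only on the min-cuts of the earlier stages). The inductive step is just the algebra displayed in the transient section: $C^{(n+1)}_{M_i} = R^{(n)}_{M_i} + \vert M_i\vert = C^{(n)}_{M_i} - C^{(n)}_{M_{k_n}} + \vert M_i\vert$, and the subtracted term $C^{(n)}_{M_{k_n}}$ does not depend on $i$. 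Since $K_n$ is an additive constant common to all node-cut-sets, the Stage-$n$ minimum node-cut-set $M_{k_n}$ is precisely a minimizer over all node-cut-sets of $g_n(i) := C_{M_i} + (n-1)\vert M_i\vert$, exactly as the transient section found for the first few stages.

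\textbf{Step 2: eventual stabilization and the threshold.} Next I would show that $M_k$, the node-cut-set of minimum cardinality (in case of ties, the one of smallest capacity among those of minimum cardinality), minimizes $g_n$ once $n$ is large enough, and stays a minimizer thereafter. We have $g_n(k)\le g_n(i)$ iff $(n-1)\bigl(\vert M_i\vert-\vert M_k\vert\bigr)\ge C_{M_k}-C_{M_i}$. For $i$ with $\vert M_i\vert=\vert M_k\vert$ this is $0\ge C_{M_k}-C_{M_i}$, which holds by the choice of $M_k$ and is independent of $n$; for $i$ with $\vert M_i\vert>\vert M_k\vert$ it is equivalent to $n-1\ge \frac{C_{M_k}-C_{M_i}}{\vert M_i\vert-\vert M_k\vert}$, whose right-hand side is a fixed number, so once the inequality holds it holds for every larger $n$ — this monotonicity is precisely what gives ``the same throughout.'' Taking the worst case over all $i$ with $\vert M_i\vert>\vert M_k\vert$ then yields the stated value of $n$: if $C_{M_k}\le C_{M_i}$ for all $i$ every such quotient is nonpositive and $M_k$ already works from the start (the $n=0$ branch); otherwise the governing quantity is $\max_i \frac{C_{M_k}-C_{M_i}}{\vert M_i\vert-\vert M_k\vert}$ (the maximum understood over $i$ with $\vert M_i\vert>\vert M_k\vert$, rounded up to the next integer since a stage count is integral). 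Combining with Step 1 gives $M_{k_j}=M_k$ for all stages $j$ past that threshold.

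\textbf{Main obstacle.} Once the recurrence of Step 1 is in hand the argument is essentially bookkeeping; the genuinely delicate point is the handling of ties. If some $M_i\neq M_k$ has both $\vert M_i\vert=\vert M_k\vert$ and $C_{M_i}=C_{M_k}$ there is no unique minimum node-cut-set in any stage and the statement must be read modulo that degeneracy; and if $\vert M_i\vert=\vert M_k\vert$ with $C_{M_i}<C_{M_k}$, then the bare ``minimum-cardinality'' set is not the correct $M_k$, which is why $M_k$ should be chosen as the minimum-capacity one among the minimum-cardinality node-cut-sets. I would state this caveat explicitly; after that, the monotonicity in Step 2 closes the proof.
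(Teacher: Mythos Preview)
Your proposal is correct and follows essentially the same approach as the paper: reduce the Stage-$n$ minimum to minimizing $C_{M_i}+(\text{stage index})\vert M_i\vert$ up to an $i$-independent constant, then solve the resulting linear inequality for the threshold stage. Your treatment is in fact more careful than the paper's, which carries out the same algebra but does not make the induction for the recurrence explicit and does not address the tie-breaking issue you flag in your ``Main obstacle.''
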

\begin{proof} Continuing as above till the $n^{th}$ stage(where $n$ is some number), we get:\\
\emph{Stage n:} \\
In the $n^{th}$ stage, the flow will be given by:
\begin{eqnarray*}
f^{(n)} & = & \min_{i}((C_{M_{i}}- C_{M_{k_{1}}})+n|M_{i}|-|M_{k_{n-1}}|)\\
& = & \min_{i}( C_{M_{i}}+n|M_{i}|)- C_{M_{k_{1}}}-(n-1)|M_{k_{n-1}}| .
\end{eqnarray*}
Now we claim that as $n$ becomes large, the minimum is given by $\min_i (M_i)$, and $C_{M_i}$ is negligible in comparison with $ n |M_{i}|$. \\ 
To prove this claim, consider a set $M_{k}$ such that $\vert M_{k} \vert$ is less than $\vert M_{i} \vert $ for all $i$ except $k$. Also, $C_{M_{k}}$ may or may not be the least. We wish to prove that there exists an $n$ such that $ M_{k}$ is going to be the min node-cut-set for all stages after stage $n$. For that, we need to show that for all stages after $n$, $\min_{i}( C_{M_i}+n|M_i|)$ occurs at $i=k$.\\
Equivalently, we need to show the existence of $n$ such that 
\begin{equation} \label{ineq}
C_{M_k} + n \vert M_{k} \vert \leq C_{M_i} + n \vert M_{i} \vert, \forall i \neq k .
\end{equation}
\emph{Case 1:}
$C_{M_{k}} \leq C_{M_{i}}$ , $\forall i$. 
Also,  $\vert M_{k} \vert \leq \vert M_{i} \vert$. \\
combining the equations, we get: \\
$C_{M_{k}}  + n \vert M_{k} \vert \leq C_{M_{i}}  + n \vert M_{i} \vert,$  $ \forall n \ge 0$\\
which proves the result for this case.\\
\emph{Case 2:}\\
$C_{M_{k}} > C_{M_{i}} $, for some or all i.\\
Then, $C_{M_{k}}  + n \vert M_{k} \vert \leq C_{M_{i}}  + n \vert M_{i} \vert$ \\
$\Rightarrow C_{M_{k}} - C_{M_{i}} \leq n (\vert M_{i} \vert - \vert M_{k} \vert), \forall i$ \\
\begin{equation}
n \geq \frac{ C_{M_{k}} - C_{M_{i}}}{\vert M_{i} \vert - \vert M_{k} \vert } , \forall i.
\end{equation}
So, for $n \geq \max_i \Big(\frac{ C_{M_{k}} - C_{M_{i}}}{\vert M_{i} \vert - \vert M_{k} \vert } \Big)$, the minimum node-cut-set is always $M_{k}$ as it satisfies (\ref{ineq}). 
And there exists an $i$, which maximizes the RHS.  With this, we establish the existence of such a value $n$, thus proving our claim. \\
So, after a considerable time has elapsed, the min node-cut-set is the same throughout, i.e. $M_{k}$ such that $M_{k} = \min_{i}(|M_{i}|)$. \qedhere
\end{proof}

\begin{note} This model (non-uniform network with dissipation) is, in fact, a generalization of the uniform network with dissipation. The Case (1) of Theorem~\ref{after n stages} is a general case of the uniform network. We obtain the result that the min cut set is the same throughout for this case which is concurrent with the result ~\ref{min cut set same} of the uniform network.
\end{note}
\begin{theorem}\label{value of flow non uniform}The maximum rate of flow in every stage in the steady state is given by:
\begin{equation}
\bar{f}=\frac{|M_{k}|\omega}{\triangle T_{u}} .
\end{equation}
\end{theorem}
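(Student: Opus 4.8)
The plan is to mirror the steady-state argument used for the uniform network (Corollary~\ref{max flow same}, Theorem~\ref{value of tau}, Theorem~\ref{value of flow uniform}), but with the minimum-cardinality cut there replaced by the cut $M_k$ singled out by Theorem~\ref{after n stages}. First I would invoke Theorem~\ref{after n stages} to fix, once and for all, the node-cut-set $M_k$ with $|M_k| = \min_i |M_i|$, and note that from stage $n$ onward ($n$ as in that theorem) $M_k$ is the minimum node-cut-set in every stage; this is exactly the steady-state regime, so it suffices to compute $\bar f$ in an arbitrary stage $i>n$.

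Next I would show that in the steady state the residual capacity of $M_k$ is zero at the end of each stage. Since $M_k$ is the minimum node-cut-set in stage $i-1$, the max-flow min-cut theorem (Theorem~\ref{max-flow-min-cut}) forces the stage-$(i-1)$ flow to saturate $M_k$, so $R^{(i-1)}_{M_k}=0$ and every node of $M_k$ sits at its critical temperature. Giving the network $\tau$ units of rest raises the capacity of each node of $M_k$ by $\lfloor \tau\omega/\triangle T_u\rfloor$ --- and this is a genuine equality, not merely an inequality, because a node at critical temperature still has the full gap $\theta_{c}-\theta_{0}\ge\triangle T_u$ available to cool into, so for $\tau\omega=\triangle T_u$ the base-temperature floor is never reached. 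Hence $C^{(i)}_{M_k}=\lfloor\tau\omega/\triangle T_u\rfloor\,|M_k|$, which does not depend on $i$, so the per-stage max flow is constant, just as in Corollary~\ref{max flow same}.

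Then I would optimise over $\tau$. By Theorem~\ref{value of tau} the rate-maximising choice is $\tau=\triangle T_u/\omega$; with this value $\triangle c=\lfloor 1\rfloor=1$, so $f^{(i)}=C^{(i)}_{M_k}=|M_k|$ and
\[
\bar f=\frac{f^{(i)}}{\tau}=\frac{|M_k|}{\triangle T_u/\omega}=\frac{|M_k|\,\omega}{\triangle T_u},
\]
which is the claimed value. For completeness I would also re-run the two-case check of Theorem~\ref{value of tau} directly in the non-uniform setting: if $\tau\omega<\triangle T_u$ then $\triangle c=0$, no node of $M_k$ recovers, the network stays disconnected, and $\bar f=0$; if $\tau\omega>\triangle T_u$ then $C^{(i)}_{M_k}=\lfloor\tau\omega/\triangle T_u\rfloor|M_k|$ bounds the max flow (being the capacity of a cut), so $\bar f\le \lfloor\tau\omega/\triangle T_u\rfloor|M_k|/\tau\le \omega|M_k|/\triangle T_u$ since $\lfloor x\rfloor\le x$.

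The main obstacle I anticipate is the bookkeeping behind ``$R^{(i-1)}_{M_k}$ drains to exactly zero each stage'': one must be sure that $M_k$ really is the governing bottleneck (or handle ties among minimum node-cut-sets), since if some other cut were the one saturated, $R^{(i-1)}_{M_k}$ could be positive and the per-stage max flow would not reduce to the clean $\lfloor\tau\omega/\triangle T_u\rfloor|M_k|$. This is resolved by Theorem~\ref{after n stages}: beyond stage $n$, every minimum node-cut-set has cardinality $|M_k|$, so all of them recover at the same rate per unit rest, and $M_k$ genuinely controls the flow. Once that point is pinned down, the remainder is the same short computation as in the uniform case.
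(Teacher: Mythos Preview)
Your proposal is correct and follows essentially the same route as the paper: invoke Theorem~\ref{after n stages} to pin down $M_k$ as the permanent minimum node-cut-set in the steady state, then reduce the computation of $\bar f$ to the uniform-network argument (Corollary~\ref{max flow same}, Theorem~\ref{value of tau}, Theorem~\ref{value of flow uniform}). The paper's own proof is a two-line appeal to exactly those earlier results, whereas you spell out the residual-capacity bookkeeping and re-run the two-case optimisation over $\tau$ explicitly; this extra detail is sound but not a different strategy.
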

\begin{proof}
Since in all subsequent stages, the min node-cut-set is the same, i.e. $M_{k}$, using Lemma ~\ref{min cut set same} and Lemma ~\ref{value of tau}, the rate of flow, $\bar{f}$ in any subsequent stage is the same as in the previous case and is equal to
\[\bar{f}=\frac{|M_{k}|\omega}{\triangle T_{u}}. \qedhere \]
\end{proof}

\section{Non-Uniform Network With Cooling} \label{cooling-system}

The uniform networks, being a subset of non-uniform networks, do not require to be analyzed separately. So, here in this section, we consider a general network (non-uniform network) with a cooling mechanism. The model description goes as follows:

We have a network, with the nodes at their maximum capacities. The problem is the same- to send as many heating packets from $s$ to $t$ via the network as possible. The only way this model differs from the basic model is that we have some “cooling packets” for the repair and maintenance of the network.
\emph{Cooling packets} are the packets that can travel via the network such that they cool a node they traverse by an amount $\triangle T_{d}$. Also, a node which is already at its maximum capacity (meaning that the node is already operating at the lowest temperature possible) does not require any cooling (in fact, it cannot be cooled any further because of restrictions on the lower bound of the temperature for each node), so the cooling packet does not cool such a vertex, which essentially means that it does not lose its cooling capacity (see Definition ~\ref{cooling capacity}) while traversing that node.

\begin{defn}\label{cooling capacity}

\emph{Cooling capacity} ($\beta$) is defined as the amount by which the cooling packet can cool the nodes before getting exhausted. 
This means a cooling packet can cool at most $n$ nodes before getting exhausted, where $n$ is such that:
$$n\triangle T_{d} \le \beta, \mathrm{i.e., } \, n \leq \Big\lfloor \frac{\beta}{\triangle T_{d}}\Big\rfloor.$$
\eat{$n_{max}$ is the maximum number of nodes a cooling packet with capacity $\beta$ can cool before getting exhausted.}
\end{defn}
We consider identical cooling packets, i.e. all of them must be of
the same capacity $\beta$.  The cooling packets are meant only for
cooling purposes, and are distinct from regular packets whose flow 
from $s$ to $t$ is sought to be maximized. 

When the cooling capacity of a cooling packet is exhausted, it is
assumed to simply disappear from the network (the assumption is
concurrent with the assumptions on the cooling packet, viz cooling
packet shall only be used for cooling purpose s, which it fails to,
once its cooling capacity is exhausted).

Our problem is to find the dispatch pattern (of heating and cooling
packets) such that the flow (of heating packets) from source $s$ to
sink $t$ is maximized.

\subsection{Maximizing Flow Through the Network Using Cooling Packets}
\label{max using cooling packets}
Our problem is to find the dispatch pattern (of heating and cooling
packets) such that the flow (of heating packets) from source $s$ to
sink $t$ is maximized.\\ Initially, we have a network, with all the
nodes at their maximum possible capacity (since every node is at its
minimum possible temperature initially and hence maximum possible
number of heating packets can traverse that node before it reaches its
upper bound and becomes dysfunctional). Since, this model is exactly
similar to the base model, for calculating the maximum flow via this
network, we can follow the same approach as in the base model (applying Ford-Fulkerson Algorithm on the equivalent network (modified using the
node-splitting technique)).\\ Once, max flow has been achieved via
this network, it becomes disconnected. Let $M_{k}$ denote the
corresponding min node-cut-set which is the node-cut-set that has actually
disconnected the network. It should be noted, however, that there might
exist other nodes that have become dysfunctional, but do not belong
to the node-cut-set $M_{k}$. However, those nodes need not be identified,
as they will not play a decisive role in further analysis.\\

To connect this disconnected network, it is obvious that we need to
repair the nodes in the node-cut-set $M_{k}$. The only option available
to us is using the cooling packets for this purpose.\\

Also, we know that the network will become functional even if at least
one of the nodes in $M_{k}$ is repaired. However, it’ll only yield the
maximum flow (assuming no other node becomes a limiting factor (this
issue will be handled later in the analysis)), which is obviously
going to be less than the initial capacity of the set $M_{k}$. Since
we need to maximize the number of heating packets, we will have to
make all the nodes in $M_{k}$ working at their respective maximum
capacities.\\

The cooling packets will obviously have to be sent via directed
paths/walks to the target nodes in $M_{k}$.

\begin{defn}\label{walk}

A \emph{walk} is a directed path from a vertex $v_{1}$ to another vertex
$v_{2}$ such that a node may be traversed more than once, but any edge
is traversed just once. Specifically for this paper, walk is
used to refer to a directed walk from $s$ to $t$.

\end{defn}

Let $W$ denote the set of walks via which we have sent the cooling
packets to the nodes in $M_{k}$.

\begin{defn} \label{walks-set} 

Walk $W_{S}$ to a set of nodes $S$ is defined as a set of walks from
$s$ to $t$ such that the walks traverse all the nodes of the set $S$ once. The set $S$ is then said to be \emph{entirely spanned} by $W_S$. If the set of walks $W_S$ spans only some of the nodes of $S$ and not all, $S$ is said to be \emph{partially spanned} by $W_S$.

\end{defn}

So, the set of walks to the node-cut-set $M_{k}$ refers to a set of walks
which pass through all nodes of the set $M_{k}$.
Then, we have the following result.

\begin{lemma} \label{4.1} 

Let the cooling packets be sent to $M_k$, $M_k$ being the min node-cut-set of the network, via the set $W$ and let the
resulting network (with increased capacity of nodes) be denoted by
$G^{*}$. Then, the min node-cut-set of $G^{*}$ will either be $M_{k}$ again or
$M_{i}$, where $M_{i}$ is the node-cut-set partially spanned by $W$.

\end{lemma}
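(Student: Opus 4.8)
The plan is to sort the node-cut-sets of $G^{*}$ according to the way the walk set $W$ meets them, and to show that a node-cut-set that is entirely spanned by $W$ can never be strictly smaller than $M_{k}$ in $G^{*}$; the only candidates left for the minimum node-cut-set of $G^{*}$ are then $M_{k}$ itself and the node-cut-sets that $W$ spans only partially.

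First I would pin down the state of the network at the instant $G^{*}$ is created. By the max-flow min-cut theorem the heating flow that disconnected $G$ has value $f=C_{M_{k}}$; after it, every node of $M_{k}$ is dysfunctional, while every other node-cut-set $M_{i}$ carries the same flow $f$ and hence has residual capacity $C_{M_{i}}-C_{M_{k}}\ge 0$, the inequality holding because $M_{k}$ was a minimum node-cut-set of $G$. The cooling packets are then routed along the walks of $W$, which by the definition of $W$ pass through every node of $M_{k}$ and together deliver enough cooling to return each node of $M_{k}$ to its maximum capacity; thus $C^{*}_{M_{k}}=C_{M_{k}}$ in $G^{*}$, and $M_{k}$ is still a node-cut-set of $G^{*}$ since $G^{*}$ has the same underlying graph as $G$.

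Next I would note that the alternative ``entirely spanned / partially spanned'' is exhaustive: removing any node-cut-set $M_{j}$ separates $s$ from $t$, so every $s$-$t$ walk --- in particular every walk of $W$ --- uses at least one node of $M_{j}$, whence $W$ meets $M_{j}$ in all or in some of its nodes. The heart of the argument is the case $M_{j}\neq M_{k}$ entirely spanned by $W$: here I would argue that each node of $M_{j}$ lies on a walk of $W$ along which cooling packets are pushed through to the (fully restored) nodes of $M_{k}$, so every such node is itself cooled back to maximum capacity; therefore $C^{*}_{M_{j}}=C_{M_{j}}\ge C_{M_{k}}=C^{*}_{M_{k}}$, so $M_{j}$ is never strictly below $M_{k}$ in $G^{*}$, and in case of a tie we simply name $M_{k}$ as the minimum node-cut-set. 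What remains as possible minimum node-cut-sets of $G^{*}$ are then exactly $M_{k}$ and the partially spanned node-cut-sets, which is the assertion.

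The step I expect to be the real obstacle is precisely ``entirely spanned $\Rightarrow$ restored to maximum capacity'', because a cooling packet has only the finite cooling capacity $\beta$ and may be exhausted before reaching the far end of its walk, so merely lying on a walk of $W$ does not by itself guarantee that a node is fully cooled. Closing this requires using that $W$ (and the number of cooling packets dispatched) was chosen so that cooling actually reaches the nodes of $M_{k}$: a node of $M_{j}$ preceding an $M_{k}$-node along its walk is cooled first and so is certainly restored, while a node of $M_{j}$ lying beyond every $M_{k}$-node on its walk must be handled either through the explicit choice of $W$ in the construction that follows or by observing that a cooling packet passing an already-maximal node spends no capacity and so can be reused to top such a node up. I would therefore carry the bookkeeping of $\beta$ only as far as needed here and otherwise lean on the explicit walk construction developed in the subsequent results.
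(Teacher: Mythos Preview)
Your overall strategy---eliminate the entirely-spanned node-cut-sets as candidates for the new minimum---is exactly the paper's. The gap is in the key inequality you try to prove. You argue that if $M_{j}$ is entirely spanned by $W$ then every node of $M_{j}$ is cooled back to base temperature, giving $C^{*}_{M_{j}}=C_{M_{j}}$. As you yourself flag, this is not justified: a node $v\in M_{j}$ may lie on a single walk of $W$ that carries only as many cooling packets as are needed for the $M_{k}$-node on that walk, and that may be far fewer than $v$ needs to reach base temperature. Your attempted repairs (order along the walk, reuse of capacity past a base-temperature node, deferral to later constructions) do not close this, and the lemma is meant to stand on its own.

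The paper sidesteps the difficulty by proving something weaker that is still sufficient. It decomposes $C^{*}_{M_{i}}=R_{M_{i}}+\triangle C^{*}_{M_{i}}$ and compares the two summands separately: $R_{M_{i}}\ge R_{M_{k}}=0$ because $M_{k}$ was the minimum cut before cooling, and $\triangle C^{*}_{M_{i}}\ge \triangle C^{*}_{M_{k}}$ because when $M_{i}$ is entirely spanned the cooling packets deposit at least as much total cooling on $M_{i}$ as on $M_{k}$. Adding the two inequalities gives $C^{*}_{M_{i}}\ge C^{*}_{M_{k}}$ directly, with ties resolved in favour of $M_{k}$. Nowhere does one need $M_{i}$ to be restored to its original capacity; one only needs its \emph{increase} to dominate that of $M_{k}$. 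Recasting your argument this way removes the obstacle you identified.
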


\begin{proof}	

For this, we prove that the min node-cut-set can never be the set $M_{i}$
such that $M_{i}$ is spanned entirely by $W$ and $i \neq k$.  Let the
increase in capacity of a node-cut-set $M_{i}$ in $G^{*}$ be denoted by
$\triangle C^{*}_{M_{i}}$. Then, if the set $M_{i}$ is spanned by $W$
entirely, the capacity of the set $M_{i}$ is increased by at least as
much as that of $M_{k}$, i.e. $ \triangle C^{*}_{M_{i}} \ge \triangle C^{*}_{M_{k}}$ \\
Also, since $M_{k}$ was the min node-cut-set, the residual
capacity ($R_{M_{k}}$) of $M_{k}$ after flow of $f$ units would have become
zero, whereas that of $M_{i}$ will be greater than or equal to zero.
So, after the cooling packets have been sent through the
network,
\begin{eqnarray*}
C^{*}_{M_{i}} & = & R_{M_{i}}+\triangle C^{*}_{M_{i}} \\
& \ge & \triangle
R_{M_k} + C^*_{M_k}, 
\end{eqnarray*}
\begin{equation} \label{eqn final}
\therefore \ C^{*}_{M_{i}} \ge C^{*}_{M_{k}}. 
\end{equation}
Therefore, in the next stage, the node-cut-set $M_i$ cannot be the min node-cut-set, where $i$ is such that $M_i$ is entirely spanned by $W$ and $i \neq k$. (Even if the equality holds in ~\ref{eqn final}, we can assume $M_{k}$ to be the min node-cut-set for the sake of preserving the generality of the result.)
Thus we have proved that the min node-cut-set can never be the set $M_{i}$
such that $M_{i}$ is spanned entirely by $W$ and \(i \neq k\). \qedhere

\end{proof}

\begin{lemma} \label{4.2}

The maximum possible flow via the network $G^*$ is $f$, where $f’$ is the
flow obtained by applying the max-flow-min-cut theorem on the initial
network $G$.

\end{lemma}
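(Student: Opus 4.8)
The plan is to show that applying cooling packets cannot carry the network past its ``fully charged'' state, so that every node-cut-set of $G^*$ has capacity no larger than it did in the original network $G$; the bound on the maximum flow then follows from the max-flow min-cut theorem (Theorem~\ref{max-flow-min-cut}).

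First I would record the key fact about a single node. A cooling packet lowers the temperature of a node it traverses by $\triangle T_d$, but a node's temperature can never fall below its base temperature $\theta_{0_j}$ (and, by Definition~\ref{cooling capacity} and the accompanying discussion, a cooling packet passing through an already fully charged node neither cools it nor loses any of its cooling capacity). Hence in $G^*$ the temperature of every node $v_j$ is still at least $\theta_{0_j}$, so the number of heating packets it can still admit --- its capacity $c_j^*$ in $G^*$ --- satisfies $c_j^* \le \lfloor (\theta_{c_j} - \theta_{0_j})/\triangle T_u \rfloor = c_j$, the original maximum capacity of $v_j$. (One also has $c_j^* \ge r_j$, the residual capacity left after the first round of maximum flow, since cooling only helps, but only the upper bound is needed.)

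Summing over the nodes of an arbitrary node-cut-set $M_i$, Definition~\ref{capacity of a set} gives $C^*_{M_i} = \sum_{v_j \in M_i} c_j^* \le \sum_{v_j \in M_i} c_j = C_{M_i}$, the right-hand side being the capacity of $M_i$ in the fully charged network $G$. Applying this with $i = k$ and using the max-flow min-cut theorem on $G^*$ and on $G$ (where $M_k$ is a minimum node-cut-set), the maximum flow through $G^*$ is
\[
\min_i C^*_{M_i} \;\le\; C^*_{M_k} \;\le\; C_{M_k} \;=\; \min_i C_{M_i} \;=\; f',
\]
which is the asserted value. That this bound is in fact attained --- so that $f'$ is genuinely the maximum and not merely a ceiling --- depends on routing the cooling packets through $W$ so as to restore every node of $M_k$ to full capacity while, by Lemma~\ref{4.1}, keeping the minimum node-cut-set equal to $M_k$ or to a cut only partially spanned by $W$; guaranteeing that no such partially spanned cut falls below $f'$ is precisely the point taken up in the later results of this section.

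The step that does the real work is the first one: the inequality $c_j^* \le c_j$. Once that is in hand, the rest is bookkeeping with Definition~\ref{capacity of a set} and Theorem~\ref{max-flow-min-cut}. The subtlety there is to invoke the base-temperature floor correctly --- in particular to see that a fully cooled node has capacity exactly $c_j$ and not more, and that cooling an already-cold node is ``free'' and wasteful --- so that the comparison of $G^*$ with the fully charged $G$ is clean.
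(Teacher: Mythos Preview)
Your proof is correct and follows essentially the same approach as the paper's: both argue that the base-temperature floor prevents any node from exceeding its initial capacity, so no node-cut-set in $G^*$ can exceed its capacity in $G$, and the max-flow min-cut theorem then caps the flow at $f$. Your version is somewhat more explicit in isolating the per-node inequality $c_j^* \le c_j$ before summing, whereas the paper argues more directly at the cut level by observing that the fully cooled state reproduces the initial network; but the substance is the same.
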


\begin{proof} 

In the initial network $G$, all the nodes are at the temperature
$\theta_{0i}$, which is the minimum possible temperature that the node
can attain. So, the capacity of each node is the maximum, and let the the max flow be $f$.
Let $G^*$ denote the network the capacity of nodes of which have been improved by employing the cooling packets. We wish to prove that in no case can the max flow through the network $G^*$ exceed $f$.
Let $M_{i}$ , $i=1,2,…,m$ be all the possible node-cut-sets in the network
$G^*$, and let $C_{M_{i}}$ denote their respective capacities (the
capacity of a node-cut-set is equal to the sum of the capacities of its
nodes).

The maximum flow $f^*$ is given by the max-flow-min-cut theorem as
$$f^*=\min_i C_{M_{i}} .$$

Let us assume, without loss of generality, that the node-cut-set $M_{k}$ is
the one with minimum capacity. Then, since all its nodes are at their
maximum possible capacities (because they have been cooled to their respective base temperatures by using cooling packets), it follows that $C_{M{k}}$ is working at its maximum capacity. This directly implies that
$$\max f^* = \max \min_i C_{M_i} = \max C_{M_k} = C_{M_k}.$$
This is equal to the value of max flow through the initial graph $G$. Hence, the result. \qedhere

\end{proof}

\begin{lemma} \label{4.3} 
  
For the network to yield the maximum flow, every node-cut-set must work at
least at the capacity $\kappa$, where $\kappa$ is given
by $\kappa=\min_i C_{M_{i}}=f$

\end{lemma}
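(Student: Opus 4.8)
The plan is to read this off directly from the Max-Flow Min-Cut Theorem (Theorem~\ref{max-flow-min-cut}) together with Lemma~\ref{4.2}. Recall that Lemma~\ref{4.2} pins down the target: the most the repaired network $G^{*}$ can ever deliver is $f=\min_i C_{M_i}$, the max flow of the pristine network $G$. So "yielding the maximum flow" means $G^{*}$ carries exactly $f$ units, and by Theorem~\ref{max-flow-min-cut} this is possible precisely when the capacity of every node-cut-set of $G^{*}$ is at least $f$, since the value of a flow can never exceed the capacity of any cut separating $s$ from $t$.

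I would run the argument by contraposition. Suppose some node-cut-set $M_j$ works at a capacity strictly below $\kappa$, i.e.\ $C^{*}_{M_j}<\kappa=f$. Then $\min_i C^{*}_{M_i}\le C^{*}_{M_j}<f$, so applying Theorem~\ref{max-flow-min-cut} to $G^{*}$ shows that the maximum flow attainable through $G^{*}$ is at most $C^{*}_{M_j}<f$, and hence the network fails to yield the maximum flow $f$. Contrapositively, if $G^{*}$ does yield the maximum flow, then $C^{*}_{M_i}\ge\kappa$ for every $i$, which is exactly the claim; the identity $\kappa=\min_i C_{M_i}=f$ is just the definition of $\kappa$ combined with Lemma~\ref{4.2}.

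Finally I would note that $\kappa$ is the sharp bound: it is attained, since Lemma~\ref{4.2} already exhibits a repair (cooling every node of the original min cut $M_k$ back to its base temperature) under which $M_k$, and therefore the overall minimum, works at capacity exactly $\kappa=f$. Thus no node-cut-set needs to be pushed above $\kappa$, and in particular every cut other than the limiting one carries slack — the observation the later results exploit to minimise the number of cooling packets.

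I do not anticipate a genuine obstacle: the statement is essentially max-flow min-cut restated in the vocabulary of this model. The only point needing mild care is keeping the two notions of capacity apart — the pristine capacities $C_{M_i}$ that define $\kappa$ versus the post-repair capacities $C^{*}_{M_i}$ of $G^{*}$ (the notation $C^{*}_{M_i}$ introduced in the proof of Lemma~\ref{4.1}) — and invoking Lemma~\ref{4.2} to certify that $\kappa=f$ is simultaneously an unbeatable upper bound and an achievable target.
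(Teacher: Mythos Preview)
Your proposal is correct and follows essentially the same route as the paper: the paper's part~(a) is exactly your contrapositive step (a cut below $\kappa$ forces, via Theorem~\ref{max-flow-min-cut}, max flow strictly below $f$), and its part~(b) is your sharpness remark that raising every cut to $\kappa$ already suffices to realise $f$, invoking Lemma~\ref{4.2}. Your version is slightly cleaner in distinguishing the pristine capacities $C_{M_i}$ from the repaired ones $C^{*}_{M_i}$, but the argument is the same.
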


\begin{proof}

We reason as follows.

\begin{itemize}

\item[(a)] Suppose it is not necessary. That is, there exists a node-cut-set, say, 
$M_{a}$ which works at the capacity $C’_{M_{a}}$ less than $f$. Then, by 
applying the max-flow-min-cut theorem,
$C’_{M_{a}} < f .$\\
So, the max flow in this case will be $f’ = C’_{M_a} < f$.\\
But, if we know that the maximum capacity of the set $M_{k}$ is
$C_{M_{k}}$ which can be attained by increasing the capacity of its
nodes to their maximum capacities, which is not impossible. If we
increase the capacities of all the node-cut-sets in this way, it is easy to
see that the max flow will then be $C_{M_{k}}$ only, because $M_{k}$
has the minimum capacity of all the node-cut-sets when all node-cut-sets are
working at their maximum capacities.

\item[(b)] Even if we do not increase the capacity of the node-cut-sets to their respective maximum, and only upto $\kappa$, even then the
max-flow-min-cut theorem says we can attain the flow equal to $f$. And
by Lemma ~\ref{4.2}, $f$ is in fact the maximum possible flow via this network
$G$.

\end{itemize}

So, we deduce that we can attain the maximum possible flow via $G$ if
the capacity of every node-cut-set is at least $\kappa=\min_i C_{M_{i}}$. \qedhere

\end{proof}

\begin{note}
Our objective now becomes: To send cooling packets through the network in such a way that all node-cut-sets work at at least the capacity given by $ \min_i C_{M_{i}}$.
It has already been proved why the max flow can’t exceed the value $f$.
So, it is established that we can attain max flow of $f$ via the network.
So, now we have a disconnected network, say $G’$, which we have to repair by sending cooling packets so as to make it functional again so that it yields the maximum flow.
\end{note}

\begin{theorem}\label{4.4}To attain the maximum flow $f$ via network $G’$, the cooling packets must be sent via directed walks such that the walks span the entire network.
\end{theorem}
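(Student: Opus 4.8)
The plan is to read Theorem~\ref{4.4} off Lemma~\ref{4.3}. Lemma~\ref{4.3} says that $G'$ can be revived to the maximum flow $f=\kappa=\min_i C_{M_i}$ only if \emph{every} node-cut-set $M_i$ is brought back to capacity at least $f$. Right after the flow $f$ has been pushed through, the residual capacity of $M_i$ is $C_{M_i}-f$, which is $0$ for the formerly minimum cut $M_k$ and, more generally, strictly below $f$ for every cut with $C_{M_i}<2f$; every such cut therefore needs its total capacity raised by a positive amount (by $f$ in the case of $M_k$). The only mechanism that raises a node's capacity is a cooling packet traversing it, and a node visited by no cooling packet keeps exactly its residual capacity. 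Hence each node-cut-set that is not already at capacity $\ge f$ must contain at least one node lying on a walk of the dispatch set $W$, and, since after a worst-case max flow the nodes of such a tight cut are all at residual $0$, the required increment has to be delivered to those very nodes.

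The second step is to turn this per-cut requirement into a statement about all of $G'$. First I would discard any node lying on no $s$--$t$ walk, since it can carry neither heating nor cooling traffic. For any remaining node $v$ I would exhibit a node-cut-set through $v$ that is tight in $G'$ (capacity exactly $f$, hence residual $0$): passing to the node-split graph of Section~\ref{system-model}, $v$ becomes the internal edge $v.l_i\to v.r_i$, which lies on an $s$--$t$ path and therefore in a minimal $s$--$t$ edge cut, and translating back gives a minimal node-cut-set containing $v$, which is tight once one normalizes so that the $M_i$ of interest are the minimum cuts. Since this cut must be restored from residual $0$ to capacity $\ge f$ using only cooling packets routed through its own nodes, $v$ must appear on some walk of $W$. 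Letting $v$ range over all nodes shows $W$ visits every node; because a directed $s$--$t$ route that may repeat vertices but not edges is precisely a walk in the sense of Definition~\ref{walk}, and ``visits every node'' is ``entirely spans the network'' in the sense of Definition~\ref{walks-set}, this is the assertion of the theorem — and it also explains why \emph{walks} rather than single paths are unavoidable, since one $s$--$t$ path generally cannot cover every node.

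The main obstacle is the claim that every node lies on a tight node-cut-set. A cut $M_i$ with $C_{M_i}\ge 2f$ keeps residual capacity $\ge f$ and needs no repair at all, so a node that lives only inside such ``slack'' cuts is not obviously forced onto a walk; and even for a tight cut $M(v)$, restoring $C^{*}_{M(v)}$ to $f$ could in principle be done by over-cooling nodes of $M(v)$ other than $v$. The clean statement therefore rests on reading $G'$ in its maximally depleted configuration and on restricting attention to the minimum node-cut-sets, which is the setting in which the surrounding development (Lemma~\ref{4.1}, Lemma~\ref{4.2}) is cast. I would discharge it either by making that worst-case hypothesis on $G'$ explicit, or by proving the contrapositive directly: given any node $v$ untouched by $W$, use the node-split picture to produce a concrete node-cut-set through $v$ whose capacity in the repaired network $G^{*}$ is still below $f$, contradicting Lemma~\ref{4.3}. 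The remaining ingredients — quoting Lemma~\ref{4.3}, the observation that untraversed nodes retain their residual capacity, and the fact that an edge on an $s$--$t$ path lies in a minimal cut — are routine.
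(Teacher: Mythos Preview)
Your proposal is far more elaborate than the paper's own proof, which is a single sentence: ``The result follows immediately from Lemmas~\ref{4.1},~\ref{4.2}, and~\ref{4.3}.'' The intended logic there is coarse: Lemma~\ref{4.3} says every node-cut-set must reach capacity $f$; Lemma~\ref{4.1} says that after cooling via $W$, the new minimum cut is either $M_k$ or some cut only \emph{partially} spanned by $W$; so if $W$ fails to span everything, a partially spanned cut can become limiting and (with Lemma~\ref{4.2}) block the flow from reaching $f$. You instead try to force every individual node onto a walk by manufacturing, for each $v$, a tight node-cut-set containing $v$ via the node-split construction. That is a genuinely different and more structural route.

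The obstacles you flag are real, and they are not resolved by the paper either. A node lying only in cuts with $C_{M_i}\ge 2f$ need not be cooled at all, and a tight cut through $v$ can be restored by over-cooling its other members; so the ``must'' in the theorem is, strictly read, too strong. The paper tacitly treats Theorem~\ref{4.4} as a sufficient (safe) strategy rather than a sharp necessary condition, and indeed Theorem~\ref{4.8} later shows that spanning only $M_k$ already suffices, superseding Theorem~\ref{4.4}. So your extra machinery---constructing a minimum cut through an arbitrary $v$ in the node-split graph---is not needed for what the paper actually argues, and it does not close the gap you yourself identify. If your goal is to match the paper, the short route is simply to quote Lemmas~\ref{4.1}--\ref{4.3} and interpret the statement as ``spanning the entire network guarantees (and, in the worst case, is required for) maximum flow''; if your goal is a rigorous necessity claim, you would need the additional hypothesis that every node of $G'$ sits in some node-cut-set whose residual capacity is below $f$, which the paper never states.
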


\begin{proof}

The result follows immediately from Lemmas~\ref{4.1},~\ref{4.2},
and~\ref{4.3}. \qedhere

\end{proof}

\begin{note}
The theorem does not fix upon how many cooling packets are to be sent. We can safely send as many cooling packets as required to make the set $M_k$ work at its maximum capacity.  The number of packets required for the same is given by Theorem~\ref{number of cooling packets}.
\end{note}

But for sending cooling packets so as to span the entire network, it is
necessary that such a set of walks spanning the entire network exists. This is what we shall prove
next.
\begin{note}
By \emph{entire network}, we mean the entire functional network. This means that the nodes that cannot be traversed by heating packets should not be considered as  part of the network. Therefore, we define only those nodes to be a part of the network, that allow the flow of heating packets.
\end{note}
\begin{theorem} \label{4.5} 
There exists a set of walks that spans the entire network.
\end{theorem}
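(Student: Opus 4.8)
The plan is to construct the required set of walks node by node, exploiting the fact that---by the note immediately preceding the statement---a node belongs to the (functional) network precisely when a heating packet can be routed through it, and by the convention of Definition~\ref{walk} such a route is a directed $s$--$t$ walk. So the statement should reduce, after unwinding the definitions, to little more than a finite collection argument.

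First I would fix the node set: let $N$ be the set of nodes constituting the entire network, i.e., the nodes admitting the flow of a heating packet from $s$ to $t$ (once the nodes of $M_k$ have been cooled back to capacity they also lie in $N$, and cooling packets, unlike heating packets, may traverse an as-yet dysfunctional node in order to revive it). By the definition of membership in $N$, for each $v \in N$ there is at least one directed $s$--$t$ walk $W_v$ in the sense of Definition~\ref{walk} that traverses $v$. This is the one place the definitions must be invoked carefully, and it is the observation on which everything rests.

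The main step is then a finite induction on the number of nodes not yet covered. Initialize $W = \emptyset$ and the covered set $U = \emptyset$. As long as $U \neq N$, pick any $v \in N \setminus U$, choose a walk $W_v$ through $v$ as above, set $W \leftarrow W \cup \{W_v\}$, and enlarge $U$ by all nodes that $W_v$ traverses; since $v \in U$ afterwards, $|U|$ strictly increases at every step. Because $N$ is finite the procedure terminates with $U = N$, so the finite set $W$ traverses every node of the network at least once; by Definition~\ref{walks-set} this is exactly a set of walks that entirely spans the network, which is the claim. (The same is achieved in one line by taking $W = \{W_v : v \in N\}$; the greedy version additionally records $|W| \le |N|$, which is convenient when the number of cooling packets is counted later.)

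The only genuine obstacle is ensuring the step ``$v \in N \Rightarrow v$ lies on an $s$--$t$ walk'' is sound rather than circular, which is exactly why the preceding note restricts the network to the nodes that carry heating-packet flow. I would stress that the allowance in Definition~\ref{walk} to revisit a node (though never an edge) is what makes this work: an $s$--$v$ path and a $v$--$t$ path may share an edge, so their concatenation need not be a walk and a node need not lie on any $s$--$t$ walk at all---but such a node cannot carry a heating packet either, hence is rightly excluded from $N$, and for the surviving nodes the construction above goes through without change.
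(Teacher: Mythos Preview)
Your proposal is correct and follows essentially the same approach as the paper: both arguments hinge on the single observation that, by the preceding note, a node belongs to the functional network exactly when some $s$--$t$ walk passes through it, so collecting one such walk per node yields a spanning set. The paper phrases this as a one-line contradiction (``a node traversed by a heating packet lies on an $s$--$t$ path, and a path is a walk''), whereas you give the direct constructive version together with the bound $|W|\le |N|$ and a more careful discussion of Definition~\ref{walk}'s no-repeated-edge clause; these are welcome refinements but not a different route.
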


\begin{proof} 

Suppose there exists a node $v$ which the heating packets traverse,
but $\nexists$ any walk from $s$ to $t$ via that node. This is self
contradictory as the node being traversible by heating packet itself
implies that there exists path from $s$ to $t$ via $v$. A path is also
a walk. So, there exists a walk from $s$ to $t$ via $v$, which
directly implies that there exists a walk from $s$ to $v$, thereby
contradicting our supposition. Hence, there exists a set of walks that
spans the entire network. \qedhere

\end{proof}

\begin{theorem} \label{number of cooling packets}

For achieving the maximum flow in a network with cooling packets by
sending the cooling packets via the walks spanning the entire network,
we need\\
$$n \ge \sum _{i:v_{i} \in M_{k}}  \Big\lceil \frac{ c_{i} * \triangle T_{u}}{ \triangle T_{d}} \Big\rceil$$ \\ 
cooling packets per “max flow” number of heating packets.
\end{theorem}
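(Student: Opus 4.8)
The plan is to pin down how much cooling each node of the saturated min node-cut-set $M_k$ needs, convert that into a number of cooling-packet traversals, and then into a number of cooling packets.

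First I would fix the state of the network at the end of a max-flow cycle. Since $M_k$ is the min node-cut-set and the realized flow is $f=C_{M_k}$, every node $v_i\in M_k$ has been traversed by exactly $c_i$ heating packets — it is fully saturated, which is precisely what disconnected the network — so the temperature of $v_i$ has risen by $c_i\triangle T_u$ above its base temperature $\theta_{0i}$. By Lemma~\ref{4.2} no cycle can ever push more than $f=C_{M_k}$ through the network, and (as in the proof of Lemma~\ref{4.2}, or by Lemma~\ref{4.3} applied to $M_k$ itself) attaining $f$ again requires $M_k$ to operate at its full capacity $C_{M_k}$; since a node cannot be cooled below its base temperature, this forces every $v_i\in M_k$ to be cooled all the way back to $\theta_{0i}$, i.e. to have exactly $c_i\triangle T_u$ units of heat removed.

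Next I would turn each per-node heat requirement into a per-node traversal requirement. A cooling packet lowers the temperature of a traversed node by $\triangle T_d$ and, by the model, spends no cooling capacity on a node already at its base temperature. So to remove $c_i\triangle T_u$ units from $v_i$, the cooling packets must traverse $v_i$ at least $\lceil c_i\triangle T_u/\triangle T_d\rceil$ times — a traversal cannot be fractional, hence the ceiling (the last traversal may leave some of its $\triangle T_d$ unused). Summing over the nodes of $M_k$ shows the cooling packets must collectively perform at least $\sum_{v_i\in M_k}\lceil c_i\triangle T_u/\triangle T_d\rceil$ traversals of nodes of $M_k$, and this count is forced node by node (leftover cooling at $v_i$ cannot be traded against the deficit at $v_j$), which is why the bound is $\sum_i\lceil\cdot\rceil$ and not $\lceil\sum_i\cdot\rceil$.

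Finally I would convert ``traversals of $M_k$-nodes'' into ``cooling packets''. Here I would use that a minimum node-cut-set (all node capacities being positive on the functional network) is inclusion-minimal, so for each $v_i\in M_k$ there is a walk from $s$ to $t$ avoiding $M_k\setminus\{v_i\}$; routing each cooling packet along such a single-crossing walk, each packet accounts for exactly one traversal of $M_k$, so the number of cooling packets needed per max-flow cycle is at least $\sum_{v_i\in M_k}\lceil c_i\triangle T_u/\triangle T_d\rceil$. The inequality rather than equality leaves room for the extra cooling packets that Theorem~\ref{4.4} may force us to dispatch so that the walks span the rest of the functional network. I expect this last conversion to be the main obstacle: the definition of a walk permits revisiting nodes, so in an arbitrary graph a single cooling packet could in principle cross $M_k$ several times and service several of its nodes at once, which would undercut the bound; the argument must therefore be phrased in terms of the single-crossing routing that minimality makes available and that already suffices for restoring $M_k$, rather than an arbitrary routing.
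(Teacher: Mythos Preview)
Your proposal is correct and follows essentially the same line as the paper: compute the per-node cooling requirement $\lceil c_i\triangle T_u/\triangle T_d\rceil$ for each $v_i\in M_k$, sum over $M_k$, and then attribute the inequality $n\ge\sum$ to the additional cooling packets that may be needed to span the remainder of the network (Theorem~\ref{4.4}). You are more careful than the paper in two places --- justifying via Lemmas~\ref{4.2} and~\ref{4.3} why every node of $M_k$ must be fully restored, and flagging the multi-crossing-walk subtlety --- whereas the paper simply asserts the per-node count and sums, implicitly assuming one packet services one node of $M_k$.
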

\begin{proof} We have to bring the nodes in the set $M_{k}$ to their maximum capacities by sending cooling packets. Now, the number of cooling packets to be sent to node $v_{i} \in M_{k}$ with capacity $c_{i}$ is given by $n_{v_{i}}$ such that:
$n_{v_{i}}  \triangle T_{d} \ge  c_{i}  \triangle T_{u}$\\
so that\\
$n_{v_{i}} = \Big\lceil \frac{ c_{i}  \triangle T_{u}}{ \triangle T_{d}} \Big\rceil.$

Since we need to repair all the nodes in $M_{k}$, the total number of cooling packets to be sent per $f$ number of heating packets would be:
\[n= \sum_{i:v_{i} \in M_{k}} \Big\lceil \frac{ c_{i}  \triangle T_{u}}{ \triangle T_{d}} \Big\rceil.\]
But the set of walks spanning the set $M_k$ might not span the entire network. so, there may exist other nodes that the walks did not cover. For spanning the entire network, we need to employ more cooling packets for such unspanned nodes. This results in am increase in the number of cooling packets to be sent per $f$ heating packets, and hence,
\[n \ge \sum_{i : v_i \in M_{k}} \Big\lceil \frac{ c_{i}  \triangle T_{u}}{ \triangle T_{d}} \Big\rceil. \qedhere\]

\end{proof}

\subsection{Reducing the Number of Cooling Packets and the Cooling Capacity Required}

Do we really need to send cooling packets so as to span the entire
network? Perhaps not. It seems a little counter-intuitive, but we have
the following results to substantiate the realization.

\begin{lemma} \label{4.7}

Every walk in $W$ that spans the set $M_{k}$ traverses at least one
node of each node-cut-set $M_{i}$, \(i=1,2,\ldots,m\).

\end{lemma}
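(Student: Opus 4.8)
The plan is to reduce the statement to the defining property of a node-cut-set, so that essentially no calculation is involved. First I would recall that, by Definition~\ref{walk} and Definition~\ref{walks-set}, every element of $W$ is a directed walk from $s$ to $t$ in $G$; in particular each such walk is a sequence of nodes and edges of $G$ that connects $s$ to $t$. Fix an arbitrary walk $w \in W$ and an arbitrary node-cut-set $M_i$, $i \in \{1,\ldots,m\}$; the goal is to show $w$ uses at least one node of $M_i$.

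Next I would argue by contradiction. Suppose $w$ traverses no node of $M_i$. Then every node occurring on $w$ lies in $V(G)\setminus M_i$, and every edge of $w$ joins two such nodes, so $w$ is still a walk from $s$ to $t$ in the subgraph of $G$ obtained by deleting the nodes of $M_i$. Consequently $s$ and $t$ remain in the same connected component after the removal of $M_i$, contradicting Definition~\ref{node-cut-set}, which requires that deleting $M_i$ put $s$ and $t$ in separate blocks. Hence $w$ must traverse at least one node of $M_i$, and since both $w \in W$ and $M_i$ were arbitrary, the lemma follows.

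The only points that need a word of care, which I would fold in briefly: $s$ and $t$ themselves belong to no node-cut-set (otherwise the ``two separate blocks'' condition of Definition~\ref{node-cut-set} is vacuous), so the deletion argument is legitimate; and the distinction between a walk and a path is immaterial here, since repetition of nodes or the $s$-to-$t$ spanning condition on $M_k$ plays no role in whether a node of $M_i$ is used. I do not expect any genuine obstacle — Lemma~\ref{4.7} is in effect a restatement of the separating property of cut-sets — so the main effort is simply to phrase the deletion-and-contradiction argument cleanly in the paper's terminology and to flag that it applies uniformly to every $M_i$, including $M_k$ itself.
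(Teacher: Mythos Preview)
Your proposal is correct and follows essentially the same approach as the paper: both argue by contradiction, observing that an $s$--$t$ walk avoiding every node of some $M_i$ would survive the deletion of $M_i$ and thus contradict Definition~\ref{node-cut-set}. Your version is a bit more explicit (noting that $s,t\notin M_i$ and that the walk/path distinction is immaterial), but the underlying idea is identical.
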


\begin{proof} 

We prove the result by contradiction.  Let us assume that there
exists a set, say $M_{a}$ and a walk $w_{i}$ such that $w_{i}$ does
not traverse any vertex of $M_{a}$. Then, if all the nodes of
$M_{a}$ become dysfunctional, there still exists a path $w_{i}$ from
$s$ to $t$, which contradicts the fact that $M_{a}$ is a node-cut-set.
Hence, every walk $w_{i} \in W$ traverses at least one vertex of
every node-cut-set. \qedhere

\end{proof}

For the next result, we need to define what we mean by a walk through a node:

\begin{defn}
 \emph{A walk via a node $v_{i}$} is a walk from $s$ to $t$ via a walk such that the node $v_{i}$ lies on that walk(or equivalently, the walk traverses the node $v_i$).
 \end{defn}
 
\begin{theorem} \label{4.8}

Maximum flow $f$ in $G^*$ can be achieved by sending cooling packets to
the nodes of $M_{k}$ via walks from $s$ to $t$ via nodes in
$M_{k}$ such that the set of walks, say W, spans the entire set
$M_{k}$.

\end{theorem}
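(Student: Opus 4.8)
The plan is to show that a walk set $W$ spanning only the minimum node-cut-set $M_k$ already forces every node-cut-set of the repaired network $G^{*}$ to have capacity at least $f$, so that by the max-flow min-cut theorem (Theorem~\ref{max-flow-min-cut}) $G^{*}$ carries a flow of value $f$, which by Lemma~\ref{4.2} is the most any amount of repair can achieve. First I would produce the walk set exactly as in the proof of Theorem~\ref{4.5}: for each node $v_i \in M_k$ the fact that $v_i$ is traversed by heating packets yields an $s$--$t$ walk through $v_i$, and the union of one such walk per node of $M_k$ is a set $W$ that spans $M_k$ entirely (possibly augmented, if later needed, by a few more $s$--$t$ walks that still leave $W$ a spanner of $M_k$).

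Next I would draw out the structural consequences of the earlier lemmas. By Lemma~\ref{4.7} every walk of $W$ meets at least one node of every node-cut-set $M_i$, so each $M_i$ is at least partially spanned by $W$; hence by Lemma~\ref{4.1} the minimum node-cut-set of $G^{*}$ is either $M_k$ itself or some $M_i$ that is (at most) partially spanned by $W$. Now dispatch cooling packets along the walks of $W$ --- first the $\sum_{i : v_i \in M_k}\lceil c_i \triangle T_u / \triangle T_d\rceil$ packets supplied by Theorem~\ref{number of cooling packets}, which return every node of $M_k$ to its base temperature, so that $C^{*}_{M_k} = C_{M_k} = f$.

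It then remains to rule out a partially spanned $M_i$ acting as a bottleneck, i.e. to secure $C^{*}_{M_i} \ge f$ for all $i$; by Lemma~\ref{4.3} this is precisely the condition for $G^{*}$ to yield flow $f$. Here I would use that before any cooling the residual capacity of $M_i$ satisfies $R_{M_i} = C_{M_i} - f \ge 0$ because $M_k$ was a minimum cut of $G$, that the nodes of $M_i$ not reached by $W$ retain this nonnegative residual, and that by Lemma~\ref{4.7} every walk of $W$ still passes through some node of $M_i$ --- so that further cooling packets pushed along $W$, which are harmless to the already maximal $M_k$, may be used to raise the capacities of those nodes of $M_i$ until $C^{*}_{M_i} \ge f$. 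I expect this last step to be the main obstacle: one must argue that flooding the $M_k$-spanning walks genuinely lifts every remaining $M_i$ to capacity $f$ and cannot leave any node-cut-set starved below it, the point being that $C_{M_i} \ge f$ already holds in the all-maximal network $G$ while the only deficit in $G^{*}$ comes from nodes of $M_i$ missed by $W$; turning this into a quantitative guarantee, rather than an asymptotic ``send as many as are needed,'' is the delicate part, and Lemma~\ref{4.1} is what lets us restrict the check to partially spanned sets in the first place.
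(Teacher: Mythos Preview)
Your proposal reaches the right destination but by a longer road, and the detour you flag as the ``main obstacle'' is one the paper avoids entirely.

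The paper's argument is a single line of inequalities. You already invoke Lemma~\ref{4.7} to observe that every walk of $W$ meets every node-cut-set $M_i$, but you use this only to say that \emph{additional} cooling packets pushed along $W$ could be used later to top up a starved $M_i$. The paper instead exploits the full strength of Lemma~\ref{4.7} immediately: each of the cooling packets you already send to restore $M_k$ traverses, on its way, at least one node of every other $M_i$, and therefore contributes at least as much cooling to $M_i$ as it does to $M_k$. Hence $\triangle C_{M_i} \ge \triangle C_{M_k}$ with no extra packets at all. Combining this with $R_{M_i} \ge R_{M_k}$ (which you also note, since $M_k$ was a minimum cut of $G$) gives
\[
C^{*}_{M_i} = R_{M_i} + \triangle C_{M_i} \ge R_{M_k} + \triangle C_{M_k} = C^{*}_{M_k} = f
\]
directly, and the max-flow min-cut theorem finishes the proof.

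So the passage through Lemma~\ref{4.1} and the case analysis on partially versus entirely spanned cut-sets is unnecessary, and the worry about having to ``flood'' $W$ with further packets to lift a partially spanned $M_i$ never arises: the packets already dispatched to $M_k$ do that work automatically. This matters for the next theorem in the paper, which counts cooling packets and relies on the fact that no packets beyond those needed for $M_k$ itself are required.
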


\begin{proof} We are given a set $W$ of walks that span $M_{k}$, and via which we are sending the cooling packets. Now, let $R_{M_{i}}$ denote the residual capacity of the node-cut-sets in the network $G’$ be denoted by $R_{M_{i}}$ and let the increase in capacity of a node-cut-set $M_{i}$ be denoted by $\triangle C_{M_{i}}$. Let the resultant capacity of the node-cut-set $M_{i}$ in the network $G^{*}$ be represented by $C^{*}_{M_{i}}$.
Now, when we send cooling packets via nodes in $M_{k}$ such that the capacity of the node-cut-set is increased by f, using Lemma~\ref{4.7}, the capacity of all other node-cut-sets is increased at least by $f$, i.e., $\triangle C_{M_{i}} \ge \triangle C_{M_{k}}$.

Also, since $M_{k}$ was the min node-cut-set, $C_{M_{k}} \le C_{M_{i}}$.

And after flow $f$ has taken place, in the resultant network $G’$,
$$C_{M_{k}}-f \le C_{M_{i}}-f $$
i.e., $R_{M_{k}} \le R_{M_{i}}.$  \\

Therefore, $$R_{M_{k}} + \triangle C_{M_{k}}  \le R_{M_{i}} + \triangle C_{M_{i}}$$

i.e., $C^{*}_{M_{k}} \le C^{*}_{M_{i}}.$ \\

So, using max-flow-min-cut on the network $G^{*}$, we obtain the max flow $C^{*}_{M_{k}}$, which is equal to $f$.

Since $f$ is the maximum possible flow that can ever be achieved via $G$, we have thus obtained an improved approach to obtain the max flow through $G^{*}$. \qedhere

\end{proof}

\begin{theorem}
For achieving max flow in a cooling network, we need 
\[ n = \sum _{i:v_{i} \in M_{k}}  \Big\lceil \frac{ c_{i}  \triangle T_{u}}{ \triangle T_{d}} \Big\rceil\]
number of cooling packets per max flow number of heating packets.
\end{theorem}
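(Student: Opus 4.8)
The plan is to establish the equality by matching two one-sided bounds on the number $n$ of cooling packets that must be spent per cycle of $f$ heating packets, $f$ being the maximum flow of the initial network $G$. For sufficiency, I would recall from Theorem~\ref{4.8} that it is enough to route the cooling packets along a set of walks $W$ from $s$ to $t$ spanning the whole of $M_{k}$ (such a $W$ exists, since every node of $M_{k}$ lies on some walk from $s$ to $t$, by the argument of Theorem~\ref{4.5}). After the preceding $f$ heating packets have disconnected $G$, the network is dead precisely because every node of $M_{k}$ is dysfunctional, so each $v_{i}\in M_{k}$ has been traversed exactly $c_{i}$ times and carries $c_{i}\triangle T_{u}$ units of accumulated heat. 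Exactly as in the proof of Theorem~\ref{number of cooling packets}, restoring $v_{i}$ to its full capacity $c_{i}$ then requires $\lceil c_{i}\triangle T_{u}/\triangle T_{d}\rceil$ cooling-packet traversals of $v_{i}$, and routing these along the walks of $W$ consumes $\sum_{i:v_{i}\in M_{k}}\lceil c_{i}\triangle T_{u}/\triangle T_{d}\rceil$ cooling packets in total; by Lemma~\ref{4.1} together with Theorem~\ref{4.8}, once $M_{k}$ is spanned and fully restored it is again the min node-cut-set of $G^{*}$ with capacity $C_{M_{k}}=f$, so the Max-Flow Min-Cut Theorem~\ref{max-flow-min-cut} recovers the flow $f$. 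This gives the inequality ``$\le$'' and removes the slack left open in Theorem~\ref{number of cooling packets}.

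For necessity, I would invoke Lemma~\ref{4.3}: to achieve the maximum flow $f$, every node-cut-set must be restored to capacity at least $\kappa=\min_{i}C_{M_{i}}=f$. In particular $M_{k}$ must regain capacity at least $f=C_{M_{k}}$; but no node $v_{j}\in M_{k}$ can exceed its original capacity $c_{j}$, the base temperature $\theta_{0j}$ being a hard lower bound on its temperature, so having $\sum_{v_{j}\in M_{k}}c_{j}'\ge\sum_{v_{j}\in M_{k}}c_{j}$ with each $c_{j}'\le c_{j}$ forces $c_{j}'=c_{j}$ for every node of $M_{k}$. Hence each $v_{i}\in M_{k}$ must be fully restored, i.e.\ $c_{i}\triangle T_{u}$ units of heat must be removed from it; since a cooling packet removes at most $\triangle T_{d}$ per traversal, this needs at least $\lceil c_{i}\triangle T_{u}/\triangle T_{d}\rceil$ cooling-packet traversals of $v_{i}$, and summing over the node set $M_{k}$ gives $n\ge\sum_{i:v_{i}\in M_{k}}\lceil c_{i}\triangle T_{u}/\triangle T_{d}\rceil$. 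Together with the previous paragraph this yields the claimed equality.

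I expect the main obstacle to be the bookkeeping gap between ``cooling packets'' and ``cooling-packet traversals'': Definition~\ref{walk} permits a walk to revisit a node, so one packet could in principle cool a node of $M_{k}$ more than once, and a packet headed for $v_{i}\in M_{k}$ may also burn cooling capacity $\triangle T_{d}$ on hot intermediate nodes before it arrives. I would bridge this by counting traversals of the nodes of $M_{k}$ in the lower bound --- which is exactly the quantity the heat-balance estimate controls --- and charging each such traversal to a distinct packet in the routing used for the upper bound, so that the two counts agree; alternatively one assumes, consistently with the rest of Section~\ref{cooling-system}, that $\beta$ is large enough that intermediate cooling never becomes the binding constraint. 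A minor secondary point is the floor/ceiling interaction between $c_{i}=\lfloor(\theta_{ci}-\theta_{0i})/\triangle T_{u}\rfloor$ and the exact amount of heat that must be shed to recover capacity $c_{i}$; as in Theorem~\ref{number of cooling packets} I would adopt the idealization that this amount is $c_{i}\triangle T_{u}$, which is precisely what produces the ceiling appearing in the statement.
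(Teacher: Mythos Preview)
Your sufficiency argument is essentially the paper's proof: the paper simply invokes Theorem~\ref{4.8} to reduce the target from ``span the entire network'' to ``span $M_{k}$ and restore each node of $M_{k}$ to full capacity,'' and then reads off the count $\sum_{i:v_{i}\in M_{k}}\lceil c_{i}\triangle T_{u}/\triangle T_{d}\rceil$ from the computation in Theorem~\ref{number of cooling packets}. That is all the paper does; it does not argue a matching lower bound at all, so the equality in the statement is really ``this many cooling packets suffice under the strategy of Theorem~\ref{4.8},'' not an optimality claim.

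Your necessity paragraph therefore goes beyond the paper, and the gap you flag in your final paragraph is genuine. A single $s$--$t$ walk can pass through more than one node of $M_{k}$ (take $s\to a\to b\to t$ together with $s\to b$ and $a\to t$: then $\{a,b\}$ is a minimum node-cut-set and the walk $s\to a\to b\to t$ hits both). Hence one cooling packet may contribute a traversal to several nodes of $M_{k}$, and summing the per-node requirement $\lceil c_{i}\triangle T_{u}/\triangle T_{d}\rceil$ bounds the number of \emph{traversals of nodes of $M_{k}$}, not the number of packets. Your ``charge each traversal to a distinct packet'' fix works for the upper bound (where you control the routing) but does not rescue the lower bound (where an adversary might route more cleverly). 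So as written, the equality is not proved in either your argument or the paper's; only the ``$\le$'' direction is, and your sufficiency proof of that direction matches the paper's.
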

\begin{proof}
The previous result implies that now we do not need to send cooling packets to span the entire network $G$, rather our objective is to span the entire set $M_{k}$, and send cooling packets so that all nodes of $M_{k}$ function at their respective maximum capacities so that the set $M_{k}$, which is going to be the min node-cut-set in the subsequent stage, works at its maximum possible capacity which yields the maximum flow.\\ 
The number of cooling packets that span $M_k$ entirely so that all nodes of $M_k$ work at their respective maximum capacities is given by Theorem ~\ref{number of cooling packets} to be:
\[ n = \sum _{i:v_{i} \in M_{k}}  \Big\lceil \frac{ c_{i}  \triangle T_{u}}{ \triangle T_{d}} \Big\rceil \qedhere\]

\end{proof}

\begin{note}
As per Theorem~\ref{4.8}, our objective is just to send cooling packets via $M_k$ such that all nodes of $M_k$ work at their respective maximum capacities. Since a cooling packet loses $\triangle T_{d}$ of its cooling capacity upon traversing a node, we can save this cooling capacity by sending these cooling packets via shortest possible walks such that they span $M_k$ and make its nodes work at their respective maximum capacities. So now, we not only reduce the number of cooling packets required but also the cooling capacity required.
\end{note}
We now give two results on the value of $\beta$, first to make the network functional and then, to make the network functional such that it yields maximum flow.
\begin{theorem}\label{4.9} 
The capacity of a cooling packet required for making a dysfunctional network functional should at least be equal to the minimum of the shortest distances between $s$ and $t$ via $v_i \in M_k$ (the min node-cut-set of the network), i.e. $\beta \ge \min_{i:v_i \in M_k} d(s,v_i,t) $.
\end{theorem}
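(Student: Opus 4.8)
The plan is a necessity argument. To turn the dysfunctional network $G'$ back into a functional one we must revive at least one node of the disconnecting cut $M_k$, and any cooling packet that revives a node $v_i\in M_k$ has to be able to travel along some $s$--$t$ walk through $v_i$; such a walk costs the packet at least $d(s,v_i,t)$ units of cooling capacity (reading distance in units of $\triangle T_d$). Since we are free to choose which node of $M_k$ to target, the binding requirement comes from the cheapest option, forcing $\beta\ge\min_{i:v_i\in M_k}d(s,v_i,t)$.

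In more detail, the steps would be: (i)~After the initial max flow the network is disconnected precisely because every node of $M_k$ is dysfunctional, so $G'$ becomes functional only if some $v_i\in M_k$ is cooled enough to carry heating traffic again — this reduction was already observed in Subsection~\ref{max using cooling packets}. (ii)~A cooling packet can affect $v_i$ only by traversing it, and by Definition~\ref{walk} a cooling packet moves along a directed $s$--$t$ walk; hence cooling $v_i$ requires routing a cooling packet along a walk $w$ from $s$ to $t$ passing through $v_i$. (iii)~Every time the packet crosses a node not already at its base temperature it loses $\triangle T_d$ of cooling capacity, and in the worst case no node on $w$ is saturated, so completing $w$ consumes an amount of capacity at least equal to the length of $w$; since the shortest $s$--$t$ walk through $v_i$ has length $d(s,v_i,t)$, the packet must satisfy $\beta\ge d(s,v_i,t)$ in order to complete it. (iv)~Consequently, if $\beta<d(s,v_i,t)$ for \emph{every} $v_i\in M_k$, no cooling packet can complete any $s$--$t$ walk through any node of $M_k$, so no such node can be revived and $G'$ stays disconnected; this gives $\beta\ge\min_{i:v_i\in M_k}d(s,v_i,t)$.

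The step I expect to be the main obstacle is (iii): justifying that the relevant quantity is the length of the whole $s$--$t$ walk $d(s,v_i,t)$ rather than merely the sub-walk from $s$ to $v_i$. This leans on the modelling convention that a cooling packet is dispatched along a complete $s$--$t$ walk (consistent with Definition~\ref{walk}), and one must be careful about nodes that are already at their base temperature, since they absorb no cooling capacity and could shorten the effective distance — which is exactly why the result is stated as a lower bound (``at least'') and why the minimum over $v_i\in M_k$ captures the most favourable target. A minor but necessary point is to make explicit the unit convention identifying one step of the walk with $\triangle T_d$, so that the inequality $\beta\ge d(s,v_i,t)$ is meaningful.
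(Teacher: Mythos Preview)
Your proposal is correct and mirrors the paper's argument: at least one node of $M_k$ must be repaired, a cooling packet reaching it must traverse an $s$--$t$ walk through that node, and choosing the cheapest such node gives the bound. The paper's own proof is in fact terser than yours; the point you flag in step~(iii)---why the full distance $d(s,v_i,t)$ is used rather than just $d(s,v_i)$---is handled not in the proof itself but in a remark following the next theorem, and the rationale given there (that the packet must continue to $t$ so that $M_k$ remains the min node-cut-set in subsequent stages) differs slightly from the walk-convention justification you offer, though the overall structure is the same.
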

\begin{proof}
For the network to just become functional again, we need to repair at least one vertex of $M_{k}$. To reduce the cooling capacity requirement, we would send the cooling packet to the vertex $v_{i} \in M_{k}$ such that $v_{i}$ is nearest to s. Hence, if we denote by d(s, $v_{i}$,t) the distance from $s$ to $t$ via node $v_{i}$, the minimum possible value of $\beta$ required would be:
\[\min_{i:v_{i} \in M_{k}}d(s,v_{i},t). \qedhere\]
\end{proof}

\begin{theorem}
The capacity of a cooling packet required to obtain maximum possible flow through the network should at least be equal to the maximum of the shortest distances between $s$ and $t$ via $v_i \in M_k$ (the min node-cut-set of the network), i.e. $\beta \ge \max_{i:v_i \in M_k} d(s,v_i,t)$.

\end{theorem}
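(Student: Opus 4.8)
The plan is to reduce the statement to Theorem~\ref{4.8} and then read off the capacity bound from the node of $M_k$ that is hardest to reach from $s$. By Theorem~\ref{4.8}, the maximum flow $f$ through $G^*$ is obtained precisely when the cooling packets are dispatched along a set of walks $W$ that spans the whole of $M_k$, that is, every node of $M_k$ is traversed --- and hence cooled back to its base temperature, i.e.\ to its maximum capacity --- by some cooling packet. So any dispatch strategy achieving max flow must in particular deliver a cooling packet to \emph{each} node of $M_k$.

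First I would fix the node $v_j \in M_k$ attaining $d(s,v_j,t) = \max_{i:\,v_i \in M_k} d(s,v_i,t)$, the ``farthest'' node of the min node-cut-set in the sense of the shortest through-distance. By the reduction above, some cooling packet must traverse $v_j$; since (by Definition~\ref{walk}, and as used throughout Section~\ref{cooling-system}) cooling packets are dispatched along directed walks from $s$ to $t$, that packet follows some walk $w$ from $s$ to $t$ passing through $v_j$. Because $d(s,v_j,t)$ is by definition the shortest such distance, $w$ is no shorter than $d(s,v_j,t)$, and the packet must have enough cooling capacity to carry it along $w$, so $\beta \ge d(s,v_j,t)$. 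Since all cooling packets have the common capacity $\beta$, this is a lower bound any max-flow strategy must respect, and substituting the definition of $v_j$ yields $\beta \ge \max_{i:\,v_i \in M_k} d(s,v_i,t)$. This is the promised ``$\max$'' counterpart of Theorem~\ref{4.9}: there, mere functionality requires repairing only one node of $M_k$, so the nearest node governs and a $\min$ appears; here Theorem~\ref{4.8} forces \emph{all} of $M_k$ to be repaired, so the farthest node governs and a $\max$ appears.

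The step I expect to need the most care is justifying that the governing quantity is the full $s$--$t$ through-distance $d(s,v_i,t)$ and not merely the $s$-to-$v_i$ distance: a cooling packet vanishes once exhausted, so a priori it only needs capacity enough to reach $v_j$, not to continue on to $t$. This is where I would lean on the modelling convention (Definition~\ref{walk}, and the way $d$ is already used in Theorem~\ref{4.9}) that cooling packets travel along complete $s$--$t$ walks, together with the worst case in which the intermediate nodes on the chosen walk are themselves not yet at maximum capacity and so each drain $\triangle T_d$ of cooling capacity. Two smaller points to verify: a node of $M_k$ may need several cooling packets to be brought fully up to $c_i$, but each of them must still reach that node, so the bound is unchanged; and nodes outside $M_k$ that the walks must also touch, or extra length forced by the rule that each edge is traversed only once (Definition~\ref{walk}), can only increase the required $\beta$, which is consistent with the inequality ``at least'' in the statement.
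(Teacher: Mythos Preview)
Your proposal is correct and follows essentially the same approach as the paper: invoke Theorem~\ref{4.8} to conclude that every node of $M_k$ must be reached by some cooling packet along an $s$--$t$ walk, then let the farthest such node govern the required capacity, yielding $\beta \ge \max_{i:\,v_i \in M_k} d(s,v_i,t)$. Your treatment is in fact more careful than the paper's own proof, which does not explicitly address the through-distance versus $s$-to-$v_i$ issue or the possibility of multiple packets per node; the paper simply asserts that ``even the node farthest from $s$'' must be traversed and reads off the bound.
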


\begin{proof}

By Theorem ~\ref{4.8}, to obtain maximum possible flow through the network, we need to send cooling packets via the set of walks $W$ such that $W$ entirely spans $M_k$.
Also, we need to repair all the nodes of $M_{k}$ to their respective maximum capacities, so as to be able to obtain a max flow $f$. For this, we need the capacity of the cooling packets to be such that even the node ($\in M_k$) farthest from $s$ is also traversed. For that, we need capacity to be such that
$\beta = \max_{i:v_{i} \in M_{k}}d(s,v_{i},t) $, which proves the result. \qedhere

\end{proof}

\begin{note}

We are not maximizing or minimizing over the distances from $s$ to $v_{i}$, rather over distance 
from $s$ to $t$ via $v_{i}$, because if we do not traverse from $s$ to $t$, in the subsequent stage, 
$M_{k}$ might not necessarily be the min node-cut-set.

\end{note}

\section{Conclusion}

This paper defines a thermal network, and gives results for the maximum flow 
that is achievable through a thermal network. In many networks, there are 
restrictions on the nodes, which may be repair constraints or pollution 
level constraints (in road networks) or the amount of data to be transferred 
through a node that is already stressed (in computer networks).  This work 
is a generalization to all such problems, whose systems may thus be
regarded as real-life thermal networks.  An aspect of the model we have 
discussed is that it is dynamic in nature, thereby capturing the temporal 
properties of the nodes. There are infrastructures which are to be 
maintained and used for very long durations. In such networks, we have to 
maximize the flow while maintaining nodes in a manner that does not
contribute to their breakdowns.

Also, our results give the maximum flow values through the
network under such constraints which can be used to measure the amount
of error in heuristic algorithms developed for similar problems.

This paper also opens up the scope of developing exact algorithms for
such networks using the approach we have implemented.  The models can
also be extended, e.g., to get a dissipating model with the rates of 
dissipation being different for different nodes.  Algorithmic work is
also possible, especially with real-life system data; for instance,
algorithms for coolant problems, optimizing the capacity of the coolants
used, and similar applications and implementation to practical problems.

\section*{Acknowledgment}

The work of the first author was partially supported by an INSPIRE
research fellowship from the Department of Science and Technology,
Government of India.

\bibliographystyle{siam}
\bibliography{references2,thermal}

\end{document}